\setlist{nosep} 
\newtheorem{theorem}{Theorem}
\newtheorem{lemma}[theorem]{Lemma}
\newtheorem{corollary}[theorem]{Corollary}
\theoremstyle{remark}
\newcommand*{\q}{\mathbb{Q}}
\newcommand*{\Oe}{\mathcal{O}_E}
\newcommand*{\zm}{\zeta_{2^m}}
\newcommand*{\qzm}{\q(\zeta_{2^m})}
\newcommand*{\degK}{\nicefrac{1}{2}\,\mathrm{degree}(E)}
\newcommand*{\tr}{\mathrm{Tr}}
\newcommand*{\trq}{\mathrm{Tr}_{K/\mathbb{Q}}\,}
\newcommand*{\trqe}{\mathrm{Tr}_{E/\mathbb{Q}}\,}
\newcommand{\ket}[1]{|#1\rangle}
\newcommand{\bra}[1]{\langle #1|}
\newcommand*{\ip}[1]{ \langle #1 \rangle } 
\newcommand{\vk}[1]{#1}
\title{Stabilizer operators and Barnes-Wall lattices}
\author{Vadym Kliuchnikov, Sebastian Sch{\"o}nnenbeck }
\date{June 2024}
\def\arraystretch{1.1}
\begin{document}

\maketitle

\begin{abstract}
We give a simple description of rectangular matrices that can be implemented by a post-selected stabilizer circuit.
Given a matrix with entries in dyadic cyclotomic number fields $\mathbb{Q}(\exp(i\frac{2\pi}{2^m}))$, 
we show that it can be implemented by a post-selected stabilizer circuit if it has entries 
in $\mathbb{Z}[\exp(i\frac{2\pi}{2^m})]$ when expressed in a certain non-orthogonal basis.
This basis is related to Barnes-Wall lattices.
Our result is a generalization to a well-known connection between Clifford groups and Barnes-Wall lattices.
We also show that minimal vectors of Barnes-Wall lattices are stabilizer states, which may be of independent interest.
Finally, we provide a few examples of generalizations beyond standard Clifford groups.
\end{abstract}

\tableofcontents

\section{Introduction}

Our goal is to give a simple description for rectangular matrices that can be implemented by a post-selected stabilizer circuit.
\vk{Briefly, a post-selected stabilizer circuit is a quantum circuit that consists of a sequence of allocations of ancillary qubits in a zero state, Clifford unitary gates
and destructive post-selected computational-basis measurements (see \cref{sec:stabilizer-operators} for the additional background).
These circuits describe an effect of a stabilizer circuit on quantum state, when certain quantum measurement outcomes have been observed.
Stabilizer circuits correspond to operations that can be implemented on fault-tolerant quantum computers relatively cheaply.}
It is known that any $n$-qubit unitary with entries in $\mathbb{Z}[i,1/\sqrt{2}]$ can be implemented by 
a Clifford+T circuit \cite{Giles2013}. We give a similar description for rectangular matrices implemented by a post-selected stabilizer circuit,
except we require to re-express the matrix in a certain basis.
For one qubit, this basis is given by two by two matrix $B$, that we call Barnes-Wall basis matrix:
\begin{equation} \label{eq:b-matrix}
B = 
\left(\begin{array}{cc}
1+i & 1 \\
0   & 1
\end{array}\right)
\end{equation}
In follow-up work \cite{k2024}, we show that our result has numerous application to the synthesis of quantum circuits over Clifford+T and other related gate sets.   
The following theorem is the formal version of the main result of this paper.

\begin{figure}[htp]
    \centering





\includegraphics[]{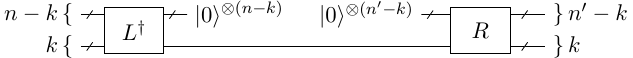}
    \caption{Post-selected general form circuit with $n$ input qubits and  $n'$ output qubits.
    Unitaries $L$ and $R$ are Clifford unitaries, $k$ is the number of inner qubits.
    \vk{The circuit consist of four steps: applying the Clifford unitary $L^\dagger$ to all qubits, applying a post-selected destructive measurement to first $n-k$ qubits, allocation of $n-k'$ qubits in a $\ket{0}$ state, 
    applying the Clifford unitary $R$ to all qubits. See \cref{sec:stabilizer-operators} for additional background.}
    Any post-selected stabilizer circuit implements a linear map proportional to a linear map implemented by some post-selected general form circuit \cite{kliuchnikov2023stabilizer}.}
    \label{fig:stab-circuit}
\end{figure}

\begin{restatable*}[Basis and trace condition]{theorem}{basiscondition}
\label{thm:basis-condition}
Consider a $2^{n'}$ by $2^{n}$ matrix $A$ with entries in $E = \qzm$~($m \ge 2$) such that $\tr(A^\dagger A) = 2^{n'}$ and such that the
matrix $((B^{-1})^{\otimes n'}) A (B^{\otimes n})$ has entries in $\Oe$, the ring of integers of $E$.
There exist unitaries $L,R$ from the $n$-qubit, $n'$-qubit Clifford group, an integer $k \le \min(n,n')$ 
and an integer $j$ such that 
$$
 A = \zm^j  (1+i)^{n'-k} R \cdot \left(\ket{0}^{\otimes (n'-k)}\otimes I_{2^k}\right) \cdot \left(\bra{0}^{\otimes (n-k)}\otimes I_{2^k} \right)\cdot L^\dagger,
$$
that is $A$ is a linear operator that can be implemented by a post-selected stabilizer circuit in \cref{fig:stab-circuit}.
When $A$ is an isometry, $k=n$ and no post-selection is necessary.
\end{restatable*}

An important special case of the above result is a simple description 
of the stabilizer states: 
\begin{corollary}
Consider an $n'$-qubit quantum state $\ket{\psi}$ with entries in $E = \qzm$~($m \ge 2$)
such that $((1+i)B^{-1})^{\otimes n'} \ket{\psi}$ has entries in $\Oe$, the ring of integers of $E$.
There exist a unitary $R$ from the $n'$-qubit Clifford group and an integer $j$ such that 
$ \ket{\psi} = \zm^j  R \ket{0}^{\otimes n'} $, that is, $\ket{\psi}$ is a stabilizer state up to a global phase.
\end{corollary}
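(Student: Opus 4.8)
The plan is to derive this corollary directly from \cref{thm:basis-condition} by viewing the state $\ket{\psi}$ as the degenerate case $n=0$ of a rectangular matrix, after fixing the right normalization. Concretely, I would set $A := (1+i)^{n'}\ket{\psi}$, regarded as a $2^{n'}$ by $2^0 = 1$ matrix, so that the input register is empty ($n=0$). Assuming $\ket{\psi}$ is normalized (as a quantum state is), the trace condition of the theorem then holds automatically: since $|1+i|^2 = 2$, we get $\tr(A^\dagger A) = |1+i|^{2n'}\,\langle\psi|\psi\rangle = 2^{n'}$, exactly as required.

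Next I would check that the theorem's integrality condition matches the hypothesis of the corollary. With $n=0$ the factor $B^{\otimes n}$ is the empty tensor product, i.e. the scalar $1$, so the basis condition reads that $(B^{-1})^{\otimes n'} A$ has entries in $\Oe$. Substituting $A = (1+i)^{n'}\ket{\psi}$ gives $(B^{-1})^{\otimes n'} A = ((1+i)B^{-1})^{\otimes n'}\ket{\psi}$, which is precisely the matrix assumed to have entries in $\Oe$ in the statement. Hence all hypotheses of \cref{thm:basis-condition} are satisfied for this $A$.

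Applying the theorem then produces Cliffords $L,R$, an integer $k$ with $0 \le k \le \min(n,n') = \min(0,n') = 0$, and an integer $j$ realizing the stated factorization. The key simplification is that $k=0$ is forced (it is already needed for $\bra{0}^{\otimes(n-k)}$ to be well defined), after which the $0$-qubit objects collapse: $L^\dagger$ acts on the empty register and is trivial, $\bra{0}^{\otimes(n-k)}\otimes I_{2^k} = 1$, and $\ket{0}^{\otimes(n'-k)}\otimes I_{2^k} = \ket{0}^{\otimes n'}$. The factorization therefore reduces to $A = \zm^j (1+i)^{n'} R\,\ket{0}^{\otimes n'}$. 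Dividing both sides by the nonzero scalar $(1+i)^{n'}$ and recalling $A = (1+i)^{n'}\ket{\psi}$ yields $\ket{\psi} = \zm^j R\,\ket{0}^{\otimes n'}$, which is the claim.

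Once the main theorem is in hand, the argument is essentially bookkeeping, so I would not expect a genuine obstacle; the single point requiring care is aligning the scalar $(1+i)^{n'}$ so that it simultaneously enforces the trace condition and converts the corollary's hypothesis into the theorem's integrality condition. I would also double-check the degenerate conventions for the empty ($n=0$) register, namely that empty tensor products and the trivial $0$-qubit Clifford behave as scalars, so that the collapse of the circuit to a pure state preparation is unambiguous.
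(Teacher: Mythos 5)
Your proposal is correct and matches the paper's (implicit) treatment: the corollary is stated as an immediate special case of \cref{thm:basis-condition}, obtained exactly as you do by taking $n=0$ and $A=(1+i)^{n'}\ket{\psi}$, which forces $k=0$ and collapses the circuit to a state preparation. The only micro-detail worth noting is that the $0$-qubit Clifford $L^\dagger$ is a power of $i$ rather than literally $1$, but this is absorbed into the root of unity $\zm^j$.
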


We recall the definitions of cyclotomic fields $\qzm$ and their rings of integers in \cref{sec:cyclotomic-fields}. 
The simplest example of a field $E$ and its ring of integers $\Oe$ is the case $m=2$:
$$
E = \q(i) = \{ \alpha + i \beta : \alpha,\beta \text{ are rationals } \} \quad \Oe = \mathbb{Z}[i] = \{ a + i b : a,b \text{ are integers } \}
$$
We carefully choose our definition of the $n$-qubit Clifford group~(\cref{sec:clifford-groups}) so that it is a finite group 
that consists of matrices with entries in $\q(i)$.

Our result can be viewed as a generalization of the fact that Clifford groups are automorphism groups of Barnes-Wall lattices~\cite{Nebe2001}.
More precisely, \cref{thm:basis-condition} for the case when $A$ is a unitary and $m=3$ is equivalent to
the statement of Proposition~6.4~\cite{Nebe2001} given without a proof.
The proof is omitted because the result of Proposition~6.4~\cite{Nebe2001} is similar to the result for the real Clifford groups in Proposition~5.3~\cite{Nebe2001}.
Our proof of \cref{thm:basis-condition} uses techniques familiar to a quantum computing audience and is based on the connection between minimal vectors 
of Barnes-Wall lattices and stabilizer states, that may be of an independent interest (see \cref{sec:preliminaries} for the relevant definitions):
\begin{restatable*}[Minimal vectors]{theorem}{minimalvectors}
\label{thm:minimal-vectors}
The minimal vectors of the $n$-qubit Barnes-Wall lattice over the cyclotomic field $E = \qzm$~($m \ge 2$) are stabilizer states. 
Furthermore, these minimal vectors can always be represented as $\zm^k \cdot (1+i)^n \cdot C\ket{0}^{\otimes n}$ 
for some unitary $C$ from the Clifford group and some integer $k \in [2^m]$.
\end{restatable*}

\vk{
The above result is closely related to several well-known results.
First, it is well-known that the automorphism group of a Barnes-Wall lattice over $\q$ acts transitively on the minimal vectors of the Barnes-Wall lattice~\cite{Broue1973}.
We are not aware of a reference with an explicit proof for Barnes-Wall lattices over $\q(i)$ and complex Clifford groups.
Second, it is known that any stabilizer state can be prepared using a Clifford unitary starting from standard basis states~\cite{Calderbank1998}.
Third, it might suffice to show the result for $E = \q(i)$ and then extend it to the other cyclotomic fields by 
using the techniques related to the tensor products of Hermitian lattices~\cite{coulangeon2012unreasonable,Coulangeon2000}.
Moreover, we expect that the result can be extended to other number fields that contain $\q(i)$ as a subfield using the results on the tensor products of Hermitian lattices.
In this work we provide self-contained proofs using elementary techniques familiar to a quantum computing audience.}

In the appendix we discuss computational methods that allow to 
obtain similar results beyond Clifford groups.
We discuss a heuristic approach that helps finds a matrix $\tilde B$ similar to $B$
for any finite group of unitaries matrices with entries in some number field $\tilde E$ 
so that matrices from the finite group can be characterized as matrices $U$ such that $\tilde B^{-1} U B$ has entries in the ring of integers $O_{\tilde E}$.
For this we rely on the standard algorithm for computing automorphism groups of Hermitian lattices.
We provide a Jupyter notebook using the Julia programming language that implements our heuristic approach.
We provide examples where our heuristic approach succeeds as \cref{thm:state-basis-beyond-clifford,thm:unitary-basis-beyond-clifford} and also comment when it fails.

\section{Preliminaries}
\label{sec:preliminaries}

\subsection{Cyclotomic number fields}
\label{sec:cyclotomic-fields}

Let us recall some basic definitions and facts about cyclotomic number fields.
We denote the $k$-th \emph{root of unity} as $\zeta_k = e^{i \frac{2\pi}{k}}$ and the corresponding \emph{cyclotomic field} as\footnote{For the purposes of this paper we identify cyclotomic fields with a subset of the field of complex numbers.}: 
$$
\q(\zeta_k) = \left\{ \sum_{j \in [d_k]} a_{j} \zeta_k^j \text{ where } a_j \in \q \text{ for } j \in [d_k] \right\},
$$
where $d_k$ is Euler's totient function that counts the number of positive integers less than $k$ that are co-prime to $k$.
The \emph{ring of integers} of a cyclotomic field $\q(\zeta_k)$ is 
$$
\mathbb{Z}[\zeta_k] = \left\{ \sum_{j \in [d_k]} z_{j} \zeta_k^j \text{ where } z_j \in \mathbb{Z} \text{ for } j \in [d_k] \right\}.
$$
An \emph{embedding} is a map from the field $\q(\zeta_k)$ to $\mathbb{C}$ that respects all the field operations, such as addition, multiplication, negation and inverse.
There are $d_k$ distinct embeddings of $\q(\zeta_k)$ into the complex numbers $\mathbb{C}$ that each correspond to 
a positive integer $j$ co-prime to $d_k$. 
For each such $j$, the embedding $\sigma$ is defined by equation $\sigma(\zeta_k) = \zeta_k^j$.
Each such embedding $\sigma$ also corresponds to an \emph{automorphism} of $\q(\zeta_k)$,
that is a map from $\q(\zeta_k)$ to $\q(\zeta_k)$ that respects all the field operations.

For the main text of the paper, we focus on the cyclotomic fields $\qzm$ for integer $m \ge 2$ that include the fourth cyclotomic field $\q(\zeta_4)=\q(i)$ as a subfield. 
We use $E$ to denote a cyclotomic field $\qzm$ and $\Oe$ for its ring of integers.
The real subfield of $\q(\zeta_k)$, also known as a \emph{real cyclotomic field}, is $\q(\zeta_k) \cap \mathbb{R}$.
We use $K$ to denote the real subfield of $E$.
For example, the real subfield of $\q(i)$ is the field of rational numbers $\q$.
It is always the case that $\mathrm{degree}(K) = \degK$.
We use $d = 2^{m-2}$ to denote the degree of $K$.
Note that we can always order the $2d$ embeddings $\sigma_1,\ldots,\sigma_{2d}$ of $E$ so that $\sigma_{l}(\alpha) = \sigma_{d+l}(\alpha)^\ast$ for $l \in [d]$,
where $\,^\ast$ denotes complex conjugation.
We will use $\sigma_1,\ldots,\sigma_d$ for the distinct embeddings of $K$ into $\mathbb{R}$ and $E$ into  $\mathbb{C}$.

For elements of $K$ and $E$ we define the \emph{trace} functions from $K,E$ to $\q$ as:
$$
\trq(\alpha) = \sum_{j \in [d]} \sigma_j(\alpha), \quad \trqe(\beta) = \sum_{j \in [2d]} \sigma_j(\beta),
$$
For elements $\alpha$ from $K \cap \q$ and $\beta$ from $E  \cap \q$, the values of the trace are 
$\trq(\alpha) = d \cdot \alpha$, $\trqe(\beta) = 2d \cdot  \beta$. 

It is convenient to think of $E$ as a vector space over $\q$ with inner product $\ip{\alpha, \beta} = \frac{1}{2}\trqe(\alpha \beta^\ast)$
defined via the trace function.
The Euclidean norm squared of elements of $E$ can be expressed using $\trq$ as
$$
\ip{\alpha, \alpha} = \nicefrac{1}{2}\cdot \trqe(\alpha \alpha^\ast) = \trq(\alpha \alpha^\ast)
$$
The elements $\zm^j$ for $j \in [2d]$ form an orthogonal basis of $E$.
The Euclidean norm squared of each of the basis elements is $d$.
The Euclidean norm squared of elements of $\Oe$ is at least $d$, with the value $d$ achieved only on the roots of unity $\zm^j$.

We will use the fact that $1+i$ divides $1 + \zm^j$ if and only if $\zm^j$ is a power of $i$.
This follows from two facts. The first fact is that $1+i$ can be written as a product of $1+ \zm^j$ for
for odd powers of $j$. This is achieved by repeatedly using $(1-(-u)) = (1 + \sqrt{-u})(1 - \sqrt{-u})$ 
for $u = i, \sqrt{-i}$ and further roots of $i$ till the decomposition is reached.
The second fact is that $(1+\zm^j) / (1-\zm^{j'})$ is a unit in $\Oe$ when $j,j'$ are odd \cite{Washington1997}.

\subsection{Hermitian spaces and Hermitian lattices}

For the purposes of this work we define Hermitian spaces and Hermitian lattices over the fields $E = \qzm$
and use the standard inner-product.
For much more general definitions we refer the reader to \cite{Kirschmer2016}.
Let $E^N$ be a vector space of $N$-dimensional vectors with entries in $E$.
For two elements $\ket{\phi}, \ket{\psi} $ of $E^N$ the inner product is given 
by $ \ip{\phi| \psi} = \sum_{j \in N} \ket{\phi}_j \ket{\psi}_j^\ast  $, where $\ket{\phi}_j$ is the $j$-th 
coordinate of $\ket{\phi}$.
The pair $(E^N, \ip{\cdot|\cdot})$ is an example of a \emph{Hermitian space}.

Given an invertible $N \times N$ matrix $M$ \vk{with entries in $E$}, the set $L = M \Oe^N$ is an example of a \emph{totally-definite Hermitian lattice}, 
$M$ is a \emph{basis matrix} of $L$. We will say that $L$ is a lattice over the field $E$.
Given a lattice $L$, its \emph{dual} is
$$
L^\ast = \left\{ \ket{\psi} \in E^N : \ip{ \psi | \phi} \in \Oe \text{ for all } \ket{\phi} \in L \right\}.
$$
Notably, the dual of the dual of the lattice in the lattice itself, $L = (L^\ast)^\ast$.
Given a basis matrix $M$ of $L$, one basis matrix of the dual lattice $L^\ast$ is $(M^{-1})^\dagger$.
The expression for the basis of dual lattice is a generalisation of a similar result for integer lattices that can be found in Ref.~\cite{Regev2019}.
\vk{We have said that $L = M \Oe^N$ is an example of a Hermitian lattice because not every Hermitian lattice has a basis. For the general definition of Hermitian lattices see~\cite{Kirschmer2016}. }

The \emph{minimum} of the lattice $L$ over  the field $E$ with real subfield $K$ is
$$
\min_{\ket{\psi} \in L , \ket{\psi} \ne 0} \trq \ip{\psi | \psi},
$$
and the lattice vectors on which the minimum is achieved are the \emph{minimal vectors}.
Note that a minimal vector multiplied by $\zm^j$ is again a minimal vector. 

\subsection{Pauli matrices, stabilizer states and Clifford groups}
\label{sec:stabilizer-states}

\subsubsection{Pauli matrices}

We use $I,X,Y,Z$\footnote{Notation $\sigma_x,\sigma_y,\sigma_z$ is sometimes used for $X,Y,Z$} for the two by two \emph{Pauli matrices} with entries in $\q(i)$.
A \emph{Pauli operator} $P$ is an element of the set $\pm \{I,X,Y,Z\}^{\otimes n}$.
A set of Pauli operators is \emph{independent} if none of them can be expressed as a product of the other operators from the set.
Two Pauli operators $P,Q$ \emph{commute} when $PQ = QP$.

\subsubsection{Stabilizer states}
A vector $\ket{\psi}$ with entries in $E = \qzm$ is \emph{stabilized} by a Pauli operator $P$ if $P\ket{\psi} = \ket{\psi}$.
We call $P$ a \emph{stabilizer} of $\ket{\psi}$.
If two Pauli operators $P$,$Q$ stabilize a \vk{non-zero} state $\ket{\psi}$, then $P$ and $Q$ commute.
\vk{Indeed, if $P$, $Q$ do not commute then $PQP^\dagger Q^\dagger = -I$ and $\ket{\psi}$ must be a $+1$-eigenstate of $-I$ which is only possible when $\ket{\psi}$ is zero. }
The stabilizer group of a vector $\ket{\psi}$ is 
$$
\mathrm{Stab}(\ket{\psi}) = \left\{ P \in \pm\{I,X,Y,Z\}^{\otimes n} : P\ket{\psi} = \ket{\psi} \right\}
$$
Any $2^n$ dimensional vector that is stabilized by a set of $n$ independent commuting Pauli operators $P$ is a \emph{stabilizer state}.
Note that we do not require that the norm squared $\ip{\psi|\psi}$ is one.
We use the notation $\ket{0} = \left( \begin{array}{c} 
     1\\
     0 
\end{array} \right)$ and $\ket{1}= \left( \begin{array}{c}
     0\\
     1 
\end{array} \right)$.
Both  $\ket{0}$ and  $\ket{1}$ are stabilizer states with $\mathrm{Stab}(\ket{0})= \{I,Z\}$
and $\mathrm{Stab}(\ket{0})= \{I,-Z\}$. The states $\ket{0} \pm \ket{1}$ are stabilizer states with 
stabilizer groups $\{I,\pm X\}$; the states $\ket{0} \pm i\ket{1}$ are stabilizer states with 
stabilizer groups $\{I,\pm Y\}$.
\vk{For additional background on the stabilizer formalism see Section~10.5.1 in~\cite{Nielsen2012}.}

Stabilizer states are exactly the vectors $\ket{\psi}$ with Pauli expectation $\bra{\psi}P\ket{\psi}$ equal to either $0$ 
or $\pm \ip{\psi|\psi}$ for all Pauli operators $P$. 
The next lemma shows that the condition for the Pauli expectation
$\ip{\psi | P | \psi} = \ip{ \psi | \psi }$ can be replaced with $\trq \ip{\psi | P | \psi} = \trq \ip{ \psi | \psi }$
where $K$ is the real subfield of $E$.

\begin{lemma}
\label{lem:trace-stabilizer-condition}
Let $E = \qzm$ be a cyclotomic field that contains $\q(i)$ and let $K$ be the real subfield of $E$.
Consider an $n$-qubit state $\ket{\psi}$ with entries in $E$ such that, for some Pauli operator $P$, 
we have $\trq \ip{\psi | P | \psi} = \trq \ip{ \psi | \psi }$. Then $\ket \psi$ is stabilized by $P$.
\end{lemma}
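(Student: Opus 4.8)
The plan is to exploit that every Pauli operator $P \in \pm\{I,X,Y,Z\}^{\otimes n}$ is simultaneously Hermitian and an involution, so $P^2 = I$ and its eigenvalues are $\pm 1$. First I would split $\ket\psi$ into its $\pm1$-eigencomponents $\ket{\psi_\pm} = \tfrac12(I \pm P)\ket\psi$, which have entries in $E$ (since $P$ has entries in $\q(i) \subseteq E$), are orthogonal, and satisfy $\ket\psi = \ket{\psi_+} + \ket{\psi_-}$ together with $P\ket\psi = \ket{\psi_+} - \ket{\psi_-}$. A short computation then gives $\ip{\psi|\psi} - \ip{\psi|P|\psi} = 2\ip{\psi_-|\psi_-}$, so the assertion that $\ket\psi$ is stabilized by $P$ is equivalent to $\ket{\psi_-} = 0$, i.e. to $\ip{\psi_-|\psi_-} = 0$. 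Because $P$ is Hermitian, both $\ip{\psi|\psi}$ and $\ip{\psi|P|\psi}$ are real; as they also lie in $E$, they lie in the real subfield $K$, and so does $\ip{\psi_-|\psi_-}$. Hence the hypothesis $\trq\ip{\psi|P|\psi} = \trq\ip{\psi|\psi}$ rewrites as $\trq\ip{\psi_-|\psi_-} = 0$.

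The key step is to show that $\ip{\psi_-|\psi_-}$ is totally non-negative, i.e. $\sigma_l(\ip{\psi_-|\psi_-}) \ge 0$ for every embedding $\sigma_l$ of $K$ into $\mathbb{R}$ (equivalently of $E$ into $\mathbb{C}$). Writing $\ip{\psi_-|\psi_-} = \sum_k (\psi_-)_k (\psi_-)_k^\ast$ over the coordinates, I would use that in a cyclotomic field complex conjugation commutes with every embedding: since $\sigma_l(\zm) = \zm^{a}$ for some integer $a$ while conjugation sends $\zm$ to $\zm^{-1}$, one obtains $\sigma_l(\alpha^\ast) = \sigma_l(\alpha)^\ast$ for all $\alpha \in E$. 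Therefore $\sigma_l(\ip{\psi_-|\psi_-}) = \sum_k |\sigma_l((\psi_-)_k)|^2 \ge 0$. This is the crux of the argument and the step I expect to be the main (though mild) obstacle: one must be careful that each embedding acts coordinatewise and interacts correctly with the complex conjugation built into the Hermitian inner product, which is precisely where the cyclotomic structure is used.

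Given total non-negativity, the conclusion is immediate. The trace $\trq\ip{\psi_-|\psi_-} = \sum_{l \in [d]} \sigma_l(\ip{\psi_-|\psi_-})$ is then a sum of non-negative reals equal to $0$, so every summand vanishes; in particular the identity embedding yields $\ip{\psi_-|\psi_-} = 0$ (equivalently, $\ip{\psi_-|\psi_-} \ge 0$ already as a complex number and its vanishing trace over a set of non-negative conjugates forces it to be $0$). Hence $\ket{\psi_-} = 0$, so $P\ket\psi = \ket{\psi_+} = \ket\psi$, which is exactly the claim that $\ket\psi$ is stabilized by $P$.
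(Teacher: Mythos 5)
Your proof is correct and rests on the same underlying idea as the paper's: positivity of Hermitian norms under every embedding of $E$, which hinges on the fact that the embeddings of a cyclotomic field commute with complex conjugation. The paper organizes this dually --- it first forms the direct sums $\ket{\psi'} = \bigoplus_{l} \sigma_l(\ket{\psi})$ and $P' = \bigoplus_l \sigma_l(P)$ and then argues spectrally that $\ip{\psi'|P'|\psi'} = \ip{\psi'|\psi'}$ forces $P'\ket{\psi'}=\ket{\psi'}$ --- whereas you project onto the $-1$-eigenspace of $P$ first and only then apply the embeddings to the single scalar $\ip{\psi_-|\psi_-}$; the two computations are the same, though your version has the merit of making the key identity $\sigma_l(\alpha^\ast)=\sigma_l(\alpha)^\ast$ explicit where the paper uses it only implicitly.
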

\begin{proof}
Let $\sigma_1, \ldots, \sigma_d$ be the embeddings of $E$ into $\mathbb{C}$ that are all distinct, 
even when restricted to elements of $K$. 
These embeddings also map elements of $K$ into $\mathbb{R}$ and 
yield exactly the distinct embeddings of $K$ into $\mathbb{R}$.

Let us use the notation $|\psi'\rangle = \sigma_1(|\psi\rangle) \oplus \ldots \oplus \sigma_d(|\psi\rangle)$, 
where $\oplus$ denotes the direct sum and $\sigma_l(|\psi\rangle)$ is the result of element-wise application
of $\sigma_l$ to $|\psi\rangle$
in the computational 
basis\footnote{For example, $\sigma_l(\alpha_0 \ket{0} + \alpha_1 \ket{1}) = \sigma_l(\alpha_0)\ket{0} + \sigma_l(\alpha_1)\ket{1}$}. 
We also denote $P' = \sigma_1(P) \oplus \ldots \oplus \sigma_d(P)$, 
where $\sigma_l$ is applied element-wise to $P$ in the computational basis.
\vk{The matrix $P'$ is a unitary matrix and is diagonal in a basis of eigenvectors according to the spectral theorem.}

Using this notation, we have 
$\trq \ip{ \psi | \psi } = \ip{ \psi' | \psi' } = \trq\ip{ \psi |P | \psi} = \ip{ \psi' | P' | \psi' }$. 
Note that $P'$ has eigenvalues $\pm 1$, the same as the eigenvalues of $P$. 
The vector $\ket{\psi'}$ must be the $+1$-eigenvector of $P'$ so that the equality $\ip{ \psi' | \psi' } = \ip{ \psi' | P' | \psi' }$ is possible. The equality $P' \ket{\psi'} = \ket{\psi'}$ implies $P\ket{\psi} = \ket{\psi}$, which completes the proof.
\end{proof}

The stabilizer group uniquely defines the stabilizer state~\cite{Nielsen2012} up to a scalar.
That is, given two stabilizer states $\ket{\phi}$, $\ket{\psi}$ equality $\mathrm{Stab}\ket{\phi} = \mathrm{Stab}\ket{\psi}$
implies that there exist $\alpha \in E$ such that $\ket{\phi} = \alpha \ket{\psi}$.

\subsubsection{Clifford groups}
\label{sec:clifford-groups}

Given a $2^n \times 2^n$ matrix $M$ with entries in the field $E = \qzm$, we use the notation $M^\dagger$ for the conjugate-transpose of $M$.
We say that $M$ is a unitary when $MM^\dagger$ is the identity matrix.
A Clifford unitary is a unitary with entries in $E$ such that $C P C^\dagger \in \pm \{I,X,Y,Z\}^{\otimes n}$ for all Pauli operators $P$.
Trivially, for any $u \in E$ with $|u|^2 = 1$, the global phase unitary $u I^{\otimes n}$ is an $n$-qubit Clifford unitary.
Any $n$-qubit Clifford unitary can be written as a product of a global phase unitary
and transvections $\frac{1+i}{2} I^{\otimes n} + \frac{1-i}{2} P$ for $P \in \pm\{I,X,Y,Z\}^{\otimes n}$.
This follows from the connection between Clifford unitaries and binary symplectic groups.
The $n$-qubit \emph{Clifford group} is the group generated by  $\frac{1+i}{2} I^{\otimes n} + \frac{1-i}{2} P$ for $P \in \pm\{I,X,Y,Z\}^{\otimes n}$.
Another useful set of generators is $\mathrm{CNOT}$, $S = diag\{1,i\}$ and the
global-phase-adjusted Hadamard gate $\tilde H = \frac{1}{1+i}\left( \begin{array}{cc}
    1 & 1 \\
    1 & -1
\end{array} \right)$~\cite{Dehaene2003,Aaronson2004}.

The Clifford group acts transitively on stabilizer states \cite{Dehaene2003,Aaronson2004}, that is for any stabilizer state $\ket{\psi}$
there exist a Clifford unitary $C$ and a scalar $\alpha \in E$ such that $\ket{\psi} = \alpha C\ket{0}^{\otimes n}$.
The transitive action of the Clifford group is useful for establishing the following:

\begin{lemma}
\label{lem:up-to-sign-equality}
Let $\ket{\phi}$, $\ket{\psi}$ be two stabilizer states over $E = \qzm$ such that for every Pauli operator $P$ that stabilizes 
$\ket{\phi}$ the Pauli operator $\pm P$ stabilizes $\ket{\psi}$.
Then there exist a scalar $\alpha \in E$ and a Pauli operator $Q$ such that $\ket{\phi} = \alpha Q \ket{\psi}$.
\end{lemma}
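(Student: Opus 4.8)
The plan is to use the transitive action of the Clifford group, stated just above the lemma, to normalise $\ket{\phi}$ to the all-zeros computational basis state, and then to show that the sign ambiguity in the hypothesis is exactly what a single Pauli $X$-string can correct.

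First I would invoke transitivity to write $\ket{\phi} = \beta\, C \ket{0}^{\otimes n}$ for some Clifford unitary $C$ and scalar $\beta \in E$. Since $\mathrm{Stab}(\ket{0}^{\otimes n})$ is generated by $Z_1,\dots,Z_n$, the operators $P_i := C Z_i C^\dagger$ are independent commuting Pauli operators generating $\mathrm{Stab}(\ket{\phi})$. Applying the hypothesis to each $P_i$, there are signs $\epsilon_i \in \{+1,-1\}$ with $\epsilon_i P_i \in \mathrm{Stab}(\ket{\psi})$, so $C^\dagger \ket{\psi}$ is stabilized by $C^\dagger(\epsilon_i P_i)C = \epsilon_i Z_i$ for every $i$.

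Next I would observe that the operators $\epsilon_1 Z_1,\dots,\epsilon_n Z_n$ are simultaneously diagonal in the computational basis and that their common $+1$-eigenspace is one-dimensional, spanned by the single basis vector $\ket{x}$ with $x_i = 0$ when $\epsilon_i = +1$ and $x_i = 1$ when $\epsilon_i = -1$. Hence $C^\dagger\ket{\psi} = \gamma\ket{x}$ for some $\gamma \in E$, and writing $\ket{x} = X^x\ket{0}^{\otimes n}$ with $X^x = X_1^{x_1}\cdots X_n^{x_n}$ gives $\ket{\psi} = \gamma\, C X^x \ket{0}^{\otimes n}$.

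Finally, setting $Q := C X^x C^\dagger$, which is a genuine Pauli operator because it is the Clifford conjugate of the Pauli $X^x$, a direct computation using $X^x X^x = I$ yields $Q\ket{\psi} = \gamma\, C\ket{0}^{\otimes n} = (\gamma/\beta)\ket{\phi}$, so $\ket{\phi} = \alpha Q\ket{\psi}$ with $\alpha := \beta/\gamma \in E$. I do not expect a deep obstacle here, since transitivity does the heavy lifting; the only points requiring care are bookkeeping, namely that the $P_i$ generate $\mathrm{Stab}(\ket{\phi})$ with the correct signs, that the entire discrepancy between the two stabilizer groups is captured by the sign vector $(\epsilon_i)$ and is therefore an $X$-type correction, and that every scalar produced ($\beta$, $\gamma$, and hence $\alpha$) lies in $E$. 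Equivalently, one could bypass the reduction to $\ket{0}^{\otimes n}$ and construct $Q$ directly as the Pauli whose commutation pattern with the $P_i$ realises the target sign vector, whose existence follows from the non-degeneracy of the symplectic form on the Pauli group; I prefer the transitivity route as it reuses a result already in hand.
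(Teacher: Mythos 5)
Your proposal is correct and follows essentially the same route as the paper's own proof: reduce to $\ket{\phi}=\ket{0}^{\otimes n}$ via transitivity of the Clifford action, observe that the sign-flipped $Z$-stabilizers force $\ket{\psi}$ (after conjugation) to be proportional to a computational basis state, and absorb the difference into an $X$-string. The paper states this in three sentences; your version just fills in the bookkeeping (the $\epsilon_i$, the one-dimensionality of the joint eigenspace, and the scalars lying in $E$) explicitly and correctly.
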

\begin{proof}
First, reduce the general case to $\ket{\phi} = \ket{0}^{\otimes n}$ using that the Clifford group acts transitively on stabilizer states.
In this special case $\ket{\psi}$ must be proportional to $\ket{a_1} \otimes \ldots \otimes \ket{a_n}$ for $a_k \in \{0,1\}$.
For this reason $\ket{\phi} = \alpha Q \ket{\psi}$ for some $Q \in \{I,X\}^{\otimes n}$.
\end{proof}


\subsection{Stabilizer operators and post-selected stabilizer circuits}
\label{sec:stabilizer-operators}

\vk{A post-selected stabilizer circuit consist of three kinds of operations: allocation of ancillary qubits in a $\ket{0}$ state,
applying Clifford unitaries to a subset of qubits and post-selected computational basis measurements.
We say that an operations applies a linear operator $A$, when it transform an input state $\ket{\psi}$ to the output state $A \ket{\psi}$.
For example, allocating a qubit in a zero state in addition to existing $n$ qubits corresponds to applying the operator $\ket{0} \otimes I_{2^n}$,
where we treat $\ket{0}$ as a $2 \times 1$ matrix.
Similarly, applying a Clifford unitary $U$ to the first two out of $n$ qubits corresponds to applying the operator $ U \otimes I_{2^{n-2}}$.
A post-selected measurement of $\ket{0}$ state on the first out of $n$ qubits corresponds to applying operator $\bra{0} \otimes I_{2^{n-1}}$,
where we treat $\bra{0}$ as a $1 \times 2$ matrix.
For more background on quantum circuits see \cite{Nielsen2012}.}
 
A \emph{stabilizer operator} with $n$ input qubits and $n'$ output qubit and $k \le \min(n,n')$ inner qubits is 
a following $2^{n'} \times 2^n$ matrix with entries in $\q(i)$ 
\begin{equation}
\label{eq:stab-operator}
      A = (1+i)^{n'-k} R \cdot \left(\ket{0}^{\otimes (n'-k)}\otimes I_{2^k}\right) \cdot \left(\bra{0}^{\otimes (n-k)}\otimes I_{2^k} \right)\cdot L^\dagger,
\end{equation}
where $R$ is an element of the $n'$-qubit Clifford group, $L$ is an element of the $n$-qubit Clifford group, and $I_{2^k}$ is the $2^k$
dimensional identity matrix.
\vk{\cref{fig:stab-circuit} visualizes \cref{eq:stab-operator} as a post-selected quantum circuit. }  
Any post-selected stabilizer circuit with $n$ input qubits and $n'$ output qubits applies a linear map proportional to a stabilizer operator~\cite{kliuchnikov2023stabilizer}.

Stabilizer operators are closely related to stabilizer states \vk{via Choi states}.
\vk{Given a $2^{n'} \times 2^n$ linear operator $B$ the corresponding \emph{Choi state} is the following $2^{n+n'}$ dimensional vector
\begin{equation}
\label{eq:choi-state-def}
     \ket{B} = (1+i)^n \sum_{ j \in \{0,1\}^n } \ket{j}\otimes B\ket{j}.
\end{equation}
The motivation for the choice of the scalars for the Choi state and stabilizer operators will become clear later in \cref{sec:stab-states-as-minimal-vectors}.
If $\ket{B}$ is a stabilizer state, than $B$ is proportional to a stabilizer operator~\cite{kliuchnikov2023stabilizer,Audenaert2005}.
This is a direct consequence of a bipartite normal form for stabilizer states used to study their entanglement~\cite{Audenaert2005}.}
\vk{For more background on the Choi states, see ``Operator-vector correspondence'' in Section~1.1.2 \cite{Watrous2018}.}

\vk{There is an efficient algorithms to find $R$, $L$ and $k$ in \cref{eq:stab-operator}  given a Choi state $\ket{A}$.
First, there is an efficient algorithm to find a set of independent generators of the stabilizer group of a stabilizer state~\cite{Silva2023}.
Second, there is an efficient algorithm to find $R$, $L$ and $k$ given the generators of the stabilizer group~\cite{Audenaert2005,kliuchnikov2023stabilizer}.}


\subsection{Barnes-Wall lattices}
\label{sec:bw-lattices}
The one qubit Barnes-Wall lattice over a cyclotomic field $E = \qzm$ for $m \ge 2$
is the lattice with basis matrix 
\begin{equation}
\label{eq:basis}
B = 
\left(\begin{array}{cc}
1+i & 1 \\
0   & 1
\end{array}\right).
\end{equation}
By the definition of the lattice with basis matrix $B$, it is the set $B \Oe^2$
and can also be written as 
$$
\left\{ z_1 \cdot (1+i)\ket{0} + z_2 \cdot (\ket{0} + \ket{1}) : z_1,z_2 \in \Oe \right\}.
$$
The vectors $(1+i)\ket{0}$ and $\ket{0} + \ket{1}$ are the basis vectors 
of the one qubit Barnes-Wall lattice.
The one qubit Barnes-Wall lattice also contains the vectors
$$
(1+i)\ket{1} = (1+i)(\ket{0}+\ket{1}) - (1+i)\ket{1} \quad \ket{0}-\ket{1} = \ket{0} + \ket{1} - (1-i)\cdot(1+i)\ket{1}.
$$
The $n$-qubit Barnes-Wall lattice over a cyclotomic field $E = \qzm$ is the lattice 
over $E$ with basis matrix $B^{\otimes n}$. In other words, it is the set 
$B^{\otimes n} \Oe^{2^n}$. The basis vectors of the $n$-qubit Barnes-Wall lattice are
$$
\ket{\phi_1} \otimes \ldots \otimes \ket{\phi_n},\text{ where } \ket{\phi_s} \in \left\{ (1+i)\ket{0}, \ket{0}+\ket{1} \right\}
$$

\subsubsection{Lower-bound on the minimum}
\label{sec:minima-lower-bound}
The minimum of the $n$-qubit Barnes-Wall lattice over $E$ is at most 
$\degK \cdot 2^n$.
This is because the norm squared of all basis vectors (in our choice of basis) of the lattice is $2^n$
and $\trq(2^n) = 2^n \cdot \mathrm{degree}(K) = 2^n \cdot \degK$, where $K$ is the real subfield of $E$.
Later in \cref{sec:stab-states-as-minimal-vectors} we will establish that 
this bound on the minimum is tight and basis vectors are minimal vectors of the lattice.

\subsubsection{Dual lattice}
\label{sec:dual-lattice}
The basis matrix of the dual of the one qubit Barnes-Wall lattice is $(B^{\dagger})^{-1}$. 
However, it is more convenient to choose another basis matrix for the dual  
equal to $\frac{1}{1+i} B$.
One can check that $(B^{\dagger})^{-1}$ and $\frac{1}{1+i} B$ are bases of the same lattice by direct calculation.
The basis matrix of the dual of the $n$-qubit Barnes-Wall lattice is $\frac{1}{(1+i)^n} B^{\otimes n}$.
This shows that Barnes-Wall lattices are modular lattices.

\subsubsection{Relation to Clifford groups}

The Clifford group~(as defined in \cref{sec:clifford-groups}) preserves the Barnes-Wall lattice, i.e. the unitaries from the group map vectors from the Barnes-Wall lattice to other vectors 
from the Barnes-Wall lattices.
Given an $n$-qubit unitary $U$ with entries in $E=\qzm$, one can check that it preserves the Barnes-Wall lattice
by checking that the matrix $(B^{\otimes n})^{-1} U (B^{\otimes n})$\footnote{It is possible to compute this expression faster than by using naive matrix multiplication using an approach similar to the Fast Hadamard Transform. See also \cite{Micciancio2008}.} has entries in $\Oe$.
In particular, we can check that this is the case for $\mathrm{CNOT}$, $S$ and $\tilde H$
for the two-qubit and one-qubit Barnes-Wall lattices over $\q(i)$.
The tensor product structure of the basis matrix of Barnes-Wall lattices and the fact that $\mathrm{CNOT}$, $S$ and $\tilde H$
generate the $n$-qubit Clifford group implies that Clifford unitaries preserve Barnes-Wall lattices.

\section{Minimal vectors of Barnes-Wall lattices and stabilizer states}
\label{sec:stab-states-as-minimal-vectors}

The goal of this section is to prove the following result.

\minimalvectors*

Our proof strategy is as follows.
We first show that minimal vectors of the $n$-qubit Barnes-Wall lattice over the cyclotomic field $E = \qzm$~($m \ge 2$) 
are very similar to minimal vectors of the $n$-qubit Barnes-Wall lattice over $\q(i)$~(\cref{lem:minimal-vectors-phases}).
Next we establish that the upper-bound on the minimum of Barnes-Wall from \cref{sec:minima-lower-bound} is tight and 
that the minimal vectors on the $n$-qubit Barnes-Wall lattices can be expressed in terms of minimal vectors of the $(n-1)$-qubit Barnes-Wall lattice~(\cref{lem:minima}).
Finally, we establish the main result~(\cref{thm:minimal-vectors}) by induction on the number of qubits $n$ and by relying on some common properties of stabilizer states~(\cref{lem:orthogonal-stab-states}).

The following lemma illuminates the crucial interaction between properties of Barnes-Wall lattices
and properties of cyclotomic fields $\qzm$. 

\begin{lemma}[Phases in minimal vectors]
\label{lem:minimal-vectors-phases}
\textbf{If}  a vector $(1+i)^{n-1} \cdot (\alpha \ket{0} + \beta \ket{1}) \otimes \ket{0}^{\otimes (n-1)}$ 
is a minimal vector of the $n$-qubit Barnes-Wall lattice over the field $E = \qzm$~($m\ge2$), and
$\alpha$, $\beta$ are non-zero elements of $E$,
\textbf{then} $\alpha = \zm^k$ and $\beta = i^j \zm^k$ for some integers $j \in [4]$ and $k \in [2^m]$.
\end{lemma}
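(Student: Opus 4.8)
The plan is to translate the hypothesis that $v := (1+i)^{n-1}(\alpha\ket{0}+\beta\ket{1})\otimes\ket{0}^{\otimes(n-1)}$ is a minimal vector into two kinds of information: arithmetic constraints coming from lattice membership, and a sharp norm bound coming from minimality. First I would apply $(B^{\otimes n})^{-1}=(B^{-1})^{\otimes n}$ to $v$ and demand the result lie in $\Oe^{2^n}$. Using $B^{-1}\ket{0}=\frac{1}{1+i}\ket{0}$ and $B^{-1}\ket{1}=\frac{-1}{1+i}\ket{0}+\ket{1}$, the factors of $(1+i)^{n-1}$ cancel and one finds $(B^{\otimes n})^{-1}v=\left(\frac{\alpha-\beta}{1+i}\ket{0}+\beta\ket{1}\right)\otimes\ket{0}^{\otimes(n-1)}$. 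Membership in $\Oe^{2^n}$ therefore forces $\beta\in\Oe$ and $(1+i)\mid(\alpha-\beta)$, and hence also $\alpha\in\Oe$.

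Next I would compute the norm. Since $|1+i|^2=2$, we get $\trq\ip{v|v}=2^{n-1}\big(\trq(\alpha\alpha^\ast)+\trq(\beta\beta^\ast)\big)$. By the fact recalled in \cref{sec:cyclotomic-fields}, the Euclidean norm squared $\trq(\gamma\gamma^\ast)$ of a nonzero $\gamma\in\Oe$ is at least $d=\degK$, with equality exactly when $\gamma$ is a root of unity $\zm^k$. As $\alpha,\beta$ are nonzero, this gives $\trq\ip{v|v}\ge 2^{n-1}(d+d)=2^n d$. On the other hand, \cref{sec:minima-lower-bound} shows the lattice minimum is at most $\degK\cdot 2^n=2^n d$. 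Since $v$ is minimal, $\trq\ip{v|v}$ equals the lattice minimum, which is $\le 2^n d$; combined with the lower bound this forces equality everywhere, so $\alpha=\zm^{k}$ and $\beta=\zm^{k'}$ for some $k,k'\in[2^m]$.

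It remains to pin down the relative phase $\beta/\alpha$, which I expect to be the main obstacle and the step genuinely using the arithmetic of $E$. Writing $l=k'-k$, the divisibility $(1+i)\mid(\alpha-\beta)=\zm^{k}(1-\zm^{l})$ reduces, since $\zm^{k}$ is a unit, to $(1+i)\mid(1-\zm^{l})$. I would rewrite $1-\zm^{l}=1+\zm^{l+2^{m-1}}$ using $\zm^{2^{m-1}}=-1$ and then invoke the preliminaries' fact that $1+i$ divides $1+\zm^{s}$ if and only if $\zm^{s}$ is a power of $i$. Because $\zm^{l+2^{m-1}}=-\zm^{l}$ differs from $\zm^{l}$ by the factor $i^{2}$, this is equivalent to $\zm^{l}$ being a power of $i$, i.e. $\zm^{k'-k}=i^{j}$ for some $j\in[4]$. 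Hence $\alpha=\zm^{k}$ and $\beta=\zm^{k'}=i^{j}\zm^{k}$, as claimed. The delicate point throughout is that the additive split of the norm lets minimality constrain $\alpha$ and $\beta$ individually to be roots of unity, after which only the single congruence modulo $1+i$ survives to couple their phases.
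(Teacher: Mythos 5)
Your proof is correct and follows essentially the same three-step strategy as the paper: integrality of $\alpha,\beta$ from lattice membership, roots of unity from the tight norm bound, and the relative phase from $(1+i)$-divisibility together with the cyclotomic fact from \cref{sec:cyclotomic-fields}. The only cosmetic differences are that you read off the arithmetic constraints directly from the coordinates of $(B^{\otimes n})^{-1}v$ rather than from inner products with dual-lattice vectors, and you use the divisibility $(1+i)\mid(\alpha-\beta)$ in place of the paper's $(1+i)\mid(\alpha+\beta)$ — both work since $-1=i^2$.
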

\begin{proof}
The proof consists of three steps.
In the first step, we show that $\alpha$ and $\beta$ must be integers from $\Oe$.
In the second step, we show that $\alpha$ and $\beta$ must be powers of $\zm$.
In the third step, we show that $\beta = i^j \alpha$ for some integer $j \in [4]$.
We use $K$ to denote the real subfield of $E$ and $\ket{\phi} = (1+i)^{n-1} \cdot (\alpha \ket{0} + \beta \ket{1}) \otimes \ket{0}^{\otimes (n-1)}$ in what follows.

In the first step, we show that $\alpha$ and $\beta$ must be integers from $\Oe$ by computing the inner product with some of the vectors from the dual lattice of the Barnes-Wall lattice.
Recall that the vectors $\ket{\phi_0^\#} = \frac{1}{(1+i)^{n-1}}\ket{0}\otimes(\ket{0}+\ket{1})^{\otimes (n-1)}$ and $\ket{\phi_1^\#} = \frac{1}{(1+i)^{n-1}}\ket{1}\otimes(\ket{0}+\ket{1})^{\otimes (n-1)}$ belong to the dual lattice of the Barnes-Wall lattice (\cref{sec:dual-lattice}). 
For this reason, the inner products between $\ket{\phi}$ and $\ket{\phi_0^\#}$, $\ket{\phi_1^\#}$ must be integers from $\Oe$. These inner products are exactly $\alpha$ and $\beta$, therefore $\alpha$ and $\beta$ are integers from $\Oe$.

In the second step, we show that $\alpha$ and $\beta$ must be roots of unity by using the upper bound on the minimum of the Barnes-Wall lattice. Indeed:
$$
\trq \ip{ \phi | \phi } = 2^{n-1} \cdot \left( \trq \alpha \alpha^\ast + \trq \beta \beta^\ast \right)
$$
As discussed in \cref{sec:cyclotomic-fields}, for non-zero $\alpha$, $\beta$ from $\Oe$ we have
\begin{equation}
\label{eq:trq-oe-bound}
\trq \alpha \alpha^\ast \ge \degK,\quad \trq \beta \beta^\ast \ge \degK.
\end{equation}
Therefore $ \trq \ip{ \phi | \phi } \ge \degK \cdot 2^n$.
Recall that the minimum of the Barnes-Wall lattice is at most $\degK \cdot 2^n$ as discussed in \cref{sec:bw-lattices}, therefore $\alpha$ and $\beta$ must achieve equality in \cref{eq:trq-oe-bound}. This is only possible when $\alpha$, $\beta$ are roots of unity as discussed in \cref{sec:cyclotomic-fields}.

In the third step, we show that $\alpha = i^j \beta$ for some integer $j \in [4]$.
We compute the inner product of $\ket{\phi}$ with another basis vector of the dual of the Barnes-Wall lattice.
Consider the following basis vector of the dual lattice:
$$
\ket{\phi^\#} = \frac{1}{(1-i)^n} (\ket{0} + \ket{1})^{\otimes n}.
$$
Note that the inner product $\ip{\phi^\# | \phi} = (\alpha + \beta)/(1+i)$ must be an element of $\Oe$.
It remains to show that $1+i$ divides $\alpha + \beta$ only when $\beta = i^j \alpha$.
We write $\beta = u \alpha$ for $u$ being some power of $\zeta_{4m}$.
Since $\alpha$ is a unit, $1+i$ must divide $1+u$.
As discussed in \cref{sec:cyclotomic-fields}, this is only possible when $u$ is a power of $i$, which completes the proof.
\end{proof}

The above lemma shows that at least some of the minimal vectors of the Barnes-Wall lattices over $\qzm$
are the same up to multiplication by roots of unity $\zm$. Next we see that this is the case for all minimal 
vectors and prove that the bound on the minimum of Barnes-Wall lattices given in \cref{sec:minima-lower-bound} is tight.

\begin{lemma}[The minimum of Barnes-Wall lattices]
\label{lem:minima}
The minimum of the $n$-qubit Barnes-Wall lattice over the cyclotomic field $E = \qzm$~($m \ge 2$)
is $\degK \cdot 2^n$.
The minimal vectors of the one qubit Barnes-Wall lattice are 
\begin{equation} \label{eq:minimal-vectors-one-qubit}
\zm^k (1+i) \ket{0},\quad
\zm^k  (1+i) \ket{1},\quad
\zm^k (\ket{0} + i^j\ket{1}) \quad
\text{for } j \in [4], k \in [4m],
\end{equation}
the minimal vectors of the $n$-qubit Barnes-Wall lattice can always be written as 
\begin{equation} \label{eq:minimal-vectors-property}
\ket{0} \otimes \ket{v_0} + \ket{1} \otimes \ket{v_1},
~(1+i)\ket{0}\otimes \ket{v_0},
~(1+i)\ket{1}\otimes \ket{v_1},
\end{equation}
where $\ket{v_0}$, $\ket{v_1}$ are minimal vectors of the $(n-1)$-qubit Barnes-Wall lattice.
\end{lemma}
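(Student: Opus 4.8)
The plan is to argue by induction on the number of qubits $n$, using the recursive tensor structure $B^{\otimes n} = B \otimes B^{\otimes(n-1)}$ of the basis matrix as the engine. Writing $L_n = B^{\otimes n}\Oe^{2^n}$ for the $n$-qubit Barnes-Wall lattice, I would first record the membership characterization that falls out of this tensor structure: a vector $\ket{0}\otimes\ket{w_0} + \ket{1}\otimes\ket{w_1}$ lies in $L_n$ exactly when $\ket{w_0}, \ket{w_1} \in L_{n-1}$ and $\ket{w_0}-\ket{w_1} \in (1+i)L_{n-1}$. This comes from expanding an arbitrary lattice element as an $\Oe$-combination of the two first-qubit basis choices $(1+i)\ket{0}$ and $\ket{0}+\ket{1}$ tensored with basis vectors of $L_{n-1}$, then reading off the two components. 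It is the exact higher-dimensional analogue of the one-qubit description $\{\alpha\ket{0}+\beta\ket{1} : \alpha,\beta\in\Oe,\ (1+i)\mid(\alpha-\beta)\}$.

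For the base case $n=1$, I would minimize $\trq\ip{\phi|\phi} = \trq(\alpha\alpha^\ast) + \trq(\beta\beta^\ast)$ over this one-qubit lattice. Using the fact from \cref{sec:cyclotomic-fields} that $\trq(\gamma\gamma^\ast) \ge \degK$ for nonzero $\gamma \in \Oe$, with equality exactly on roots of unity, a case split on whether $\alpha,\beta$ vanish yields the minimum $2\,\degK$: if both are nonzero they must each be a root of unity $\zm^k$, and the condition $(1+i)\mid(\alpha-\beta)$ together with the fact that $1+i$ divides $1+\zm^j$ only when $\zm^j$ is a power of $i$ (equivalently, \cref{lem:minimal-vectors-phases}) forces $\beta = i^j\alpha$, giving $\zm^k(\ket{0}+i^j\ket{1})$; if one vanishes, say $\beta=0$, then $(1+i)\mid\alpha$, so $\alpha=(1+i)\gamma$ and $\trq(\alpha\alpha^\ast)=2\,\trq(\gamma\gamma^\ast)\ge 2\,\degK$ with equality only for $\gamma$ a root of unity, giving $\zm^k(1+i)\ket{0}$, and symmetrically $\zm^k(1+i)\ket{1}$. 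This reproduces the explicit list in \cref{eq:minimal-vectors-one-qubit}.

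For the inductive step I would combine the decomposition above with the additivity of the norm across the two components, $\trq\ip{w|w} = \trq\ip{w_0|w_0} + \trq\ip{w_1|w_1}$, and the induction hypothesis that the minimum of $L_{n-1}$ is $\degK\cdot 2^{n-1}$. The crucial quantitative input is that a nonzero vector lying in $(1+i)L_{n-1}$ has norm at least twice the minimum of $L_{n-1}$: if $\ket{x}=(1+i)\ket{u}$ then $\ip{x|x} = 2\ip{u|u}$, so $\trq\ip{x|x} = 2\,\trq\ip{u|u} \ge \degK\cdot 2^n$. A three-case analysis on whether $\ket{w_0},\ket{w_1}$ vanish then gives $\trq\ip{w|w}\ge\degK\cdot 2^n$ in every case, matching the upper bound from \cref{sec:minima-lower-bound} and fixing the minimum. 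Reading off the equality conditions produces exactly the three claimed forms: when both components are nonzero each must be a minimal vector of $L_{n-1}$ (first form); when one vanishes the surviving component lies in $(1+i)L_{n-1}$ and, by the doubling bound, equals $(1+i)$ times a minimal vector of $L_{n-1}$ (second and third forms).

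The step I expect to be the main obstacle is pinning down the recursive membership condition for $L_n$ cleanly, in particular the coset constraint $\ket{w_0}-\ket{w_1}\in(1+i)L_{n-1}$ and the associated factor-of-two gain for components in $(1+i)L_{n-1}$. Once this is in place the norm-additivity together with the two bounds $\trq(\gamma\gamma^\ast)\ge\degK$ and $\trq\ip{x|x}\ge\degK\cdot 2^n$ make both the minimum value and the structural classification fall out uniformly, and the only remaining care is the cyclotomic divisibility fact that converts ``$1+i$ divides $\alpha-\beta$'' into ``$\beta=i^j\alpha$'' in the base case.
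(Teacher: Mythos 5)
Your proof is correct and follows the same induction skeleton as the paper's (induction on $n$, the same three-way case split on which components vanish, and the same base-case list), but the mechanism you use to control the components is genuinely different. The paper places $\ket{v_0},\ket{v_1}$ in the $(n-1)$-qubit lattice by pairing against explicit basis vectors of the dual lattice such as $\frac{1}{1+i}(\ket{0}\pm\ket{1})\otimes\ket{w^\#}$ and invoking $L=(L^\ast)^\ast$, and it handles the one-qubit mixed case by citing \cref{lem:minimal-vectors-phases} wholesale (which itself is proved via dual-lattice pairings). You instead derive a self-contained recursive membership criterion directly from the tensor factorization $B^{\otimes n}=B\otimes B^{\otimes(n-1)}$: the components lie in $L_{n-1}$ and satisfy the coset condition $\ket{w_0}-\ket{w_1}\in(1+i)L_{n-1}$. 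This buys you the $(1+i)$ factor in the second and third forms and the constraint $\beta=i^j\alpha$ in the base case without ever mentioning the dual lattice, which is arguably more elementary for this particular lemma; the paper's duality route has the advantage of reusing machinery that is needed anyway in \cref{lem:minimal-vectors-phases} and later in \cref{thm:minimal-vectors}. Two small points to tidy up: the cyclotomic fact in \cref{sec:cyclotomic-fields} is stated for $1+\zm^j$ while your coset condition produces $1-\zm^{j}$, so you should note that $1-\zm^{j}=1+\zm^{j+2^{m-1}}$ and $-1$ is a power of $i$, making the two statements interchangeable; and in your membership criterion the conjunct $\ket{w_0}\in L_{n-1}$ is redundant given the other two, which is harmless but worth noticing when you verify the equivalence.
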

\begin{proof}
First, recall that the minimum of the $n$-qubit Barnes-Wall lattice is at most $\degK \cdot 2^n$, 
as discussed in \cref{sec:minima-lower-bound}. 
Next, we show that the minimum of the $n$-qubit Barnes-Wall lattice is at least $\nicefrac{1}{2}~\mathrm{degree}(E) \cdot 2^n$ by induction on $n$,
and at the same time establish \cref{eq:minimal-vectors-one-qubit,eq:minimal-vectors-property}.
As before we use $K$ for the real subfield of $E$.

Consider the base case of the induction proof, where $n = 1$. 
Any one-qubit Barnes-Wall lattice vector can be written in one of the following three ways:
$$
\ket{u_0} = \alpha \ket{0},\quad
\ket{u_1} = \beta \ket{1},\quad
\ket{u_{01}} = \alpha' \ket{0} + \beta' \ket{1}
$$
for non-zero $\alpha,\beta,\alpha',\beta'$ from $E$.
We establish the lower bound on $\trq$ of the norm squared of the lattice vectors in each of the above cases separately:
First consider a lattice vector $\ket{u_0} = \alpha \ket{0}$ proportional to $\ket{0}$. To show that $\trq\langle u_0 | u_0 \rangle = \trq |\alpha|^2 \ge \degK \cdot 2$, we note that the vector $\frac{1}{1+i}(\ket{0} + \ket{1})$ belongs to the dual lattice of the one-qubit Barnes-Wall lattice. Therefore, its inner product with $\alpha \ket{0}$ must be a cyclotomic integer from $\Oe$. This implies that $\alpha$ must be equal to $(1-i)z$ for some $z$ from $\Oe$, and $|\alpha|^2 = 2|z|^2$. As noted in \cref{sec:cyclotomic-fields}, $\trq(|z|^2) \ge \degK$ with the equality achieved when $z$ is a root of unity.
This shows the lower-bound on $\trq\langle u_0 | u_0 \rangle$.
Similarly, we establish the lower bound for the lattice vectors proportional to $\ket{1}$.
Consider now a lattice vector $\ket{u_{01}} = \alpha'\ket{0} + \beta' \ket{1}$ for non-zero $\alpha',\beta'$. Applying \cref{lem:minimal-vectors-phases} with $n=1$ shows that $\alpha',\beta'$ have the form described in \cref{eq:minimal-vectors-one-qubit}. This also establishes the required lower bound on the lattice minima for $n=1$.

Suppose now that we have established the result for the $(n-1)$-qubit Barnes-Wall lattices.
Any $n$-qubit Barnes-Wall lattice vector can be written as
$$
\ket{u'_0} = (1+i)\ket{0} \otimes \ket{v_0},\quad
\ket{u'_1} = (1+i)\ket{1} \otimes \ket{v_1},\quad
\ket{u'_{01}} = \ket{0} \otimes \ket{v'_0} + \ket{1} \otimes \ket{v'_1}
$$
for non-zero vectors $\ket{v_0},\ket{v_1},\ket{v'_0},\ket{v'_1}$.
We establish the lower bound on $\trq$ of the norm squared of the lattice vectors in each of the above cases separately.
Consider $\ket{u'_0}$ and show that $\ket{v_0}$ is in the $n-1$ qubit Barnes wall lattice. 
Consider the vector $\frac{1}{1+i}(\ket{0}+\ket{1}) \otimes \ket{w^\#}$ from the dual $n$-qubit 
Barnes-Wall lattice, where $\ket{w^\#}$ is any basis vector of the dual of the $(n-1)$-qubit Barnes-Wall lattice. 
The inner product between $(1+i)\ket{0} \otimes \ket{v_0}$ and $\frac{1}{1+i}(\ket{0}+\ket{1}) \otimes \ket{w^\#}$ is in $\Oe$, and therefore the inner product between $\ket{v_0}$ and $\ket{w^\#}$ is in $\Oe$.
This implies that $\ket{v_0}$ is in the dual lattice of the dual of the Barnes-Wall lattice, which is the Barnes-Wall lattice itself. 
This implies that $\trq\langle u'_0 | u'_0 \rangle = 2\trq\langle v_0 | v_0 \rangle \ge \nicefrac{1}{2}~\mathrm{degree}(E) \cdot 2^n$. Similarly consider $\ket{u'_1}$ and show that $\ket{v_1}$ is in the $(n-1)$-qubit Barnes-Wall lattice and 
$\trq\langle u'_1 | u'_1 \rangle \ge \nicefrac{1}{2}~\mathrm{degree}(E) \cdot 2^n$.
Now by considering the inner product of $\ket{u'_{01}}$
with the vectors $\ket{0}\otimes \ket{w^\#}$, $\ket{1}\otimes \ket{w^\#}$ from the dual of the $n$-qubit Barnes-Wall lattice, we see that $\ket{v'_0}$ and $\ket{v'_1}$ must be from the $(n-1)$-qubit Barnes-Wall lattice and 
the lower-bound $\trq\langle u'_{01} | u'_{01} \rangle = \trq\langle v'_{0} | v'_{0} \rangle + \trq\langle v'_{1} | v'_{1} \rangle  \ge \degK \cdot 2^n$ follows.
\end{proof}

An immediate corollary of the above lemma is that the basis we chose for the Barnes-Wall lattice consist of minimal vectors.
Moreover, for any Clifford unitary $C$, the stabilizer state $\zm^j (1+i)^n C\ket{0}^{\otimes n}$ is a minimal vector.
There is a small gap that remains between \cref{eq:minimal-vectors-property}~(which recursively describes the minimal vectors of the Barnes-Wall lattices)
and showing that all the minimal vectors of the Barnes-Wall lattices are stabilizer states.
The following lemma is a key technical argument that closes this gap.

\begin{lemma}[Stabilizer states' superposition]
\label{lem:orthogonal-stab-states}
Let $\ket{v_0}$ and $\ket{v_1}$ be two stabilizer states over $E = \qzm$~($m \ge 2$) with the same norm 
and $\trq \mathrm{Re} \ip{v_0 | v_1}=0$ (where $K$ is the real sub-field of $E$).
If the linear combination $(\ket{v_0} + \ket{v_1})/(1+i)$ is also a stabilizer state, then 
there exists a Hermitian Pauli operator $P$ that does not stabilize $\ket{v_1}$ such that $\ket{v_0}$ is proportional to $P \ket{v_1}$.
\end{lemma}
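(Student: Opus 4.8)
The plan is to show that the stabilizer groups of $\ket{v_0}$ and $\ket{v_1}$ coincide up to signs, and then extract the desired Pauli operator from \cref{lem:up-to-sign-equality}. Writing $S_0 = \mathrm{Stab}(\ket{v_0})$, $S_1 = \mathrm{Stab}(\ket{v_1})$, and $\ket{w} = (\ket{v_0}+\ket{v_1})/(1+i)$, the goal of the main part is to prove that every $P \in S_0$ satisfies $P \in S_1$ or $-P \in S_1$, which is precisely the hypothesis of \cref{lem:up-to-sign-equality}. First I would record a bookkeeping identity putting all three states on the same footing: setting $T := \trq\ip{v_0|v_0}$ and expanding $2\ip{w|w} = \ip{v_0|v_0} + \ip{v_1|v_1} + 2\,\mathrm{Re}\ip{v_0|v_1}$, then applying $\trq$, the equal-norm hypothesis together with $\trq\,\mathrm{Re}\ip{v_0|v_1}=0$ gives $\trq\ip{w|w} = T = \trq\ip{v_1|v_1}$ (and $T>0$ since $\ket{v_0}\neq 0$).

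The core step is a single expectation-value computation. Fixing $P \in S_0$, so that $P\ket{v_0}=\ket{v_0}$ and $P=P^\dagger$, I would expand $\ip{w|P|w} = \tfrac12(\bra{v_0}+\bra{v_1})P(\ket{v_0}+\ket{v_1})$; using $\bra{v_0}P = \bra{v_0}$ collapses three of the four terms and leaves $\ip{w|P|w} = \tfrac12(\ip{v_0|v_0} + \ip{v_1|P|v_1}) + \mathrm{Re}\ip{v_0|v_1}$. Applying $\trq$ and again eliminating the cross term yields the key relation $\trq\ip{w|P|w} = \tfrac12(T + \trq\ip{v_1|P|v_1})$. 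I then invoke the characterization that a Pauli expectation of a stabilizer state lies in $\{0,\pm\ip{\psi|\psi}\}$, so both traced expectations $\trq\ip{w|P|w}$ and $a:=\trq\ip{v_1|P|v_1}$ lie in $\{0,T,-T\}$. Substituting $a$ into the relation forces $\tfrac12(T+a)\in\{0,T,-T\}$; the value $a=0$ would give $T/2$, which is impossible, so $a=\pm T\neq 0$. By \cref{lem:trace-stabilizer-condition} applied to $P$ or to $-P$, this shows $P\in S_1$ or $-P\in S_1$, i.e.\ $\pm P$ stabilizes $\ket{v_1}$.

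Having established this inclusion for all $P \in S_0$, I would apply \cref{lem:up-to-sign-equality} to obtain a scalar $\alpha\in E$ and a Pauli operator $Q$ with $\ket{v_0} = \alpha\,Q\ket{v_1}$; since every element of $\pm\{I,X,Y,Z\}^{\otimes n}$ is Hermitian, $Q$ is automatically a Hermitian Pauli operator, so ``Hermitian Pauli'' imposes no extra work. If $Q\ket{v_1}\neq\ket{v_1}$, then $P=Q$ already meets the conclusion. Otherwise $\ket{v_0}$ is proportional to $\ket{v_1}$, and I would instead take $P=-I^{\otimes n}$, a Hermitian Pauli operator with $P\ket{v_1}=-\ket{v_1}\neq\ket{v_1}$ and $\ket{v_0}\propto\ket{v_1}\propto P\ket{v_1}$.

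The main obstacle I anticipate is the cross term $\mathrm{Re}\ip{v_0|v_1}$: it is exactly the two hypotheses — equal norm and $\trq\,\mathrm{Re}\ip{v_0|v_1}=0$ — that allow it to be eliminated after applying $\trq$, and without them the clean three-valued arithmetic on $\{0,\pm T\}$ that drives the contradiction at $a=0$ would break down. A secondary point requiring care is that the whole argument must be run at the level of $\trq$ of the expectation values rather than the expectations themselves (which lie in $K$ and need not be rational), precisely so that \cref{lem:trace-stabilizer-condition} can convert the traced equalities back into genuine Pauli stabilization.
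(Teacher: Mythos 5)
Your proof is correct and follows essentially the same route as the paper's: the same norm bookkeeping, the same expansion of $\ip{w|P|w}$ for $P\in\mathrm{Stab}(\ket{v_0})$ followed by applying $\trq$ and \cref{lem:trace-stabilizer-condition} to conclude that $\pm P$ stabilizes $\ket{v_1}$, and then \cref{lem:up-to-sign-equality} to produce the Pauli operator. Your explicit treatment of the degenerate case $\ket{v_0}\propto\ket{v_1}$ (taking $P=-I^{\otimes n}$) is if anything slightly more careful than the paper's one-line claim that this case contradicts its norm equation.
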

\begin{proof}
First check that $\trq$ of the norm of $\ket{v_{01}} = (\ket{v_0} + \ket{v_1})/(1+i)$ is the same as
$\trq$ of the norm 
of $\ket{v_0}$ and $\ket{v_1}$ by direct calculation. Indeed:
\begin{equation}
\label{eq:norm}
\begin{array}{rl}
\trq\left( \frac{1}{1-i}(\bra{v_0} + \bra{v_1})\frac{1}{1+i}(\ket{v_0} + \ket{v_1}) \right) 
 & = \\
= \frac{1}{2}(\trq\ip{v_0 | v_0} + \trq\ip{v_1 | v_1}) + \trq \mathrm{Re} \ip{v_0 | v_1} 
 & = \\
= \trq \ip{v_0 | v_0 } = \trq \ip{v_1 | v_1 }
\end{array}
\end{equation}

Next let us show that if the Pauli operator $Q$ stabilizes $\ket{v_0}$ then $\pm Q$ stabilizes $\ket{v_1}$.
This immediately implies that there must exist a Pauli operator $P$ such that 
$\ket{v_0}$ is proportional to $P\ket{v_1}$, as shown in \cref{lem:up-to-sign-equality}.
Suppose the Pauli operator $Q$ stabilizes $\ket{v_0}$ and 
consider the expectation of $Q$ for the state $\ket{v_{01}}$:
\begin{align}
\label{eq:pauli-expectation}
\begin{array}{c}
\frac{1}{1-i}(\bra{v_0} + \bra{v_1}) \frac{1}{1+i} Q(\ket{v_0} + \ket{v_1})
= \\ 
= \frac{1}{2}(\ip{v_0 | v_0} + \ip{v_1 |Q| v_1}) + \mathrm{Re}\ip{v_0 | v_1} \in \{0, \ip{v_1 | v_1} \}
\end{array}
\end{align}
The expectation $\ip{v_{01} |Q| v_{01}}$ must be zero or $\ip{v_{01}|v_{01}}$ because $\ket{v_{01}}$ is a stabilizer state.
Applying $\trq$ to the \cref{eq:pauli-expectation} above, we see that $\trq \ip{v_1 |Q| v_1}$ must be 
$\pm \trq \ip{ v_1 | v_1 }$, which implies 
that $\pm Q\ket{v_1} = \ket{v_1}$ as shown in \cref{lem:trace-stabilizer-condition}.
Note that $P$ cannot stabilize $\ket{v_1}$, because this would contradict \cref{eq:norm}. 
\end{proof}

Now we have all the intermediate results to prove the main result of this section.

\minimalvectors
\begin{proof}
We prove the result by induction on the number of qubits $n$.
The case $n=1$ follows from the description of one qubit minimal vectors in \cref{eq:minimal-vectors-one-qubit} in \cref{lem:minima}
and the examples of stabilizer states in \cref{sec:stabilizer-states}.
Suppose we have established the result for the number of qubits $n$ 
and let us show the result for $n+1$.
\cref{eq:minimal-vectors-property} in \cref{lem:minima} implies,
that any $n+1$ qubit minimal vector can be in one of the three forms.
When the minimal vector is $(1+i)\ket{0}\otimes \ket{v_0}$ or 
$(1+i)\ket{1}\otimes \ket{v_1}$ with $\ket{v_0}$, $\ket{v_1}$
minimal vectors of the $n$-qubit Barnes-Wall lattice, the result immediately follows. 
It remains to show the result for a minimal vector 
$$
\ket{u'_{01}} = \ket{0}\otimes \ket{v_0} + \ket{1}\otimes\ket{v_1}
$$
where $\ket{v_0}$ and $\ket{v_1}$ are minimal vectors of the $n$-qubit Barnes-Wall lattice. 

First show that the vectors $\ket{v^{\pm}_{01}} = (\ket{v_0} \pm \ket{v_1})/(1+i)$ must also be 
minimal vectors of the $n$-qubit Barnes-Wall lattice.
Recall that the vectors $\frac{1}{1+i}(\ket{0}\pm\ket{1}) \otimes \ket{w_0}$
belong to the dual of the $(n+1)$-qubit Barnes-Wall lattice for any $\ket{w_0}$
from the dual of the $n$-qubit Barnes-Wall lattice. 
The inner product between $\ket{u'_{01}}$ and the mentioned family of vectors is
an integer from $\Oe$ and equal to the inner product between $\ket{v^{\pm}_{01}}$ and $\ket{w_0}$. 
The vectors $\ket{v^\pm_{01}}$ belong to the $n$-qubit Barnes wall lattice, 
using the fact that the dual of the dual lattice is the lattice itself.
It remains to establish that the vectors $\ket{v^{\pm}_{01}}$  are indeed minimal.
Consider the norms squared 
$$
\frac{1}{2}\ip{v_0+v_1|v_0+v_1} = \frac{1}{2}(\ip{v_0|v_0} + \ip{v_1|v_1}) + \mathrm{Re}(\ip{v_0|v_1})
$$
and 
$$
\frac{1}{2}\ip{v_0-v_1|v_0-v_1} = \frac{1}{2}(\ip{v_0|v_0} + \ip{v_1|v_1}) - \mathrm{Re}(\ip{v_0|v_1}).
$$
Applying $\trq$ to the equations above and noticing that $\trq$  
of the norm squared of the lattice vectors is lower-bounded by the lattice minima
we see that $\trq \mathrm{Re}(\ip{v_0|v_1})$ must be zero.
Therefore $\ket{v^\pm_{01}}$ are minimal vectors and stabilizer states by the induction hypothesis.
Now using \cref{lem:orthogonal-stab-states} we see that there exists a Pauli operator $P$ such 
that $\ket{v_0}$ is proportional to $P\ket{v_1}$.

So far we have established that there exists $\alpha$ from $E$ such that $\ket{v_1} = \alpha P \ket{v_0}$.
Next we will show that $\alpha$ must be a power of $i$ by relying on \cref{lem:minimal-vectors-phases}.
We do this by showing that $\ket{v^+_{01}}$ is Clifford equivalent to a state discussed in \cref{lem:minimal-vectors-phases}.
Recall that there exists a Clifford $C$ such that 
$\ket{v_0} =  \zeta^{k}_{4m}(1+i)^n C \ket{0}^{\otimes n}$.
We can then write $\ket{v_1} = \alpha \zeta^{k}_{4m}(1+i)^n C \tilde P \ket{0}^{\otimes n} $ for some Pauli operator $\tilde P$. 
Note that $\tilde P$ can always be replaced by $i^{j'} X^{a(1)}\otimes \ldots \otimes X^{a(n)} = i^{j'} X^a$ 
for some bitstring $a$ and integer $j'$.
We see that the stabilizer state $\ket{v^+_{01}}$ is equal to $\zeta^{k}_{4m} (1+i)^{n-1} C (\ket{0}^{\otimes n} + i^{j'} \alpha \ket{a})$.
The state $(1+i)^{n-1} (\ket{0}^{\otimes n} + i^{j'} \alpha \ket{a})$ 
can be transformed into $(1+i)^{n-1} (\ket{0} + i^{j'} \alpha \ket{1}) \otimes \ket{0}^{\otimes (n-1)}$
by a unitary from the Clifford group consisting of a product of SWAP and CNOT gates. 
Applying \cref{lem:minimal-vectors-phases} shows that $(1+i)^{n-1} (\ket{0} + i^{j'} \alpha \ket{1}) \otimes \ket{0}^{\otimes (n-1)}$
is a minimal vector only if $\alpha$ is a power of $i$. Because this state is related to $\ket{v_{01}}$ via 
multiplication by an element of a Clifford group and $\ket{v_{01}}$ is a minimal vector, $\alpha$ must be power of $i$.

We have shown that $\ket{v_0}= i^j P \ket{v_1}$ for some Pauli matrix $P$ and integer $j$
so $\ket{u'_{01}} = \ket{0} \otimes \ket{v_0} + \ket{1} \otimes i^j P \ket{v_0}$. The above state can be obtained 
from $(\ket{0} + \ket{1}) \otimes \ket{v_0}$ by applying the Pauli $I\otimes P$ controlled on the first qubit and $S^j$
to the first qubit.
\end{proof}

\section{Minimal vectors of Barnes-Wall lattices and stabilizer operators}
\label{sec:stabilizer-operators-as-min-vectors}

The goal of this section is to prove the following result 

\basiscondition*

Our proof strategy is a as follows.
First, we show that the Choi states~(\cref{eq:choi-state-def}) of the stabilizer operators correspond to minimal vectors of Barnes-Wall lattices in \cref{lem:stab-op-to-miv-vector}.
Second, we characterize linear operators whose Choi states are minimal vectors of Barnes-Wall lattices via basis transformation and trace conditions in \cref{lem:basis-transfrom-and-trace}.
Finally we combine the two lemmas and \cref{thm:minimal-vectors} to establish the main result of the paper.

\begin{lemma}
\label{lem:stab-op-to-miv-vector}
Let $A$ be a stabilizer operator with $n$ input qubits, $n'$ output qubits,
then its Choi state 
$$
\ket{A} = (1+i)^n \sum_{j \in \{0,1\}^n } \ket{j}\otimes A\ket{j} 
$$
is a minimal vector of the Barnes-Wall lattice over $\q(i)$ on $n+n'$ qubits.
\end{lemma}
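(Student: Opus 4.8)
The plan is to show that the Choi state $\ket{A}$ can be put into the canonical minimal-vector form $(1+i)^{n+n'}\,C\ket{0}^{\otimes(n+n')}$ for a single Clifford unitary $C$ on $n+n'$ qubits. Once this is done, the conclusion is immediate from the observation following \cref{lem:minima} (equivalently \cref{thm:minimal-vectors} specialised to $m=2$, so that $\zm=i$): every vector of the form $(1+i)^{N}C\ket{0}^{\otimes N}$ with $C$ Clifford is a minimal vector of the $N$-qubit Barnes-Wall lattice over $\q(i)$, here with $N=n+n'$. So the entire content of the proof is an exact rewriting of $\ket{A}$, with no inequality or field-theoretic estimate required.

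First I would pass to the operator-vector correspondence. Writing $\ket{\Omega_n}=\sum_{j\in\{0,1\}^n}\ket{j}\otimes\ket{j}$ for the unnormalised maximally entangled state on $2n$ qubits, the definition gives $\ket{A}=(1+i)^n(I_{2^n}\otimes A)\ket{\Omega_n}$. Substituting the stabilizer-operator form of $A$ and peeling off the Clifford $L^\dagger$ with the ricochet identity $(I\otimes L^\dagger)\ket{\Omega_n}=(\overline{L}\otimes I)\ket{\Omega_n}$ turns the state into $(1+i)^{n+n'-k}(\overline{L}\otimes R)(I_{2^n}\otimes M_k)\ket{\Omega_n}$, where $M_k=(\ket{0}^{\otimes(n'-k)}\otimes I_{2^k})(\bra{0}^{\otimes(n-k)}\otimes I_{2^k})$ and $\overline{L}$ is the entrywise complex conjugate of $L$. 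Here I would note that $\overline{L}$ is again a Clifford: complex conjugation is a field automorphism of $\q(i)$ that permutes the Pauli group up to sign, so applying it entrywise preserves the Clifford group.

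Next I would evaluate $(I_{2^n}\otimes M_k)\ket{\Omega_n}$ directly: the factor $M_k$ forces the first $n-k$ input qubits and the first $n'-k$ output qubits into $\ket{0}$ and couples the remaining $k$ input qubits to the remaining $k$ output qubits in a maximally entangled state $\ket{\Omega_k}$. After a fixed qubit permutation $\Pi$, which is a Clifford, this reads $\ket{0}^{\otimes(n-k)}\otimes\ket{0}^{\otimes(n'-k)}\otimes\ket{\Omega_k}$. The scalar-producing step is the identity $\ket{\Omega_k}=(1+i)^k D_k\ket{0}^{\otimes 2k}$, where $D_k$ is a product of $k$ Bell-pair-preparing Cliffords of the type $\mathrm{CNOT}(\tilde H\otimes I)$; this uses $(1+i)\tilde H\ket{0}=\ket{0}+\ket{1}$, so exactly one factor of $1+i$ is generated per Bell pair.

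Collecting the scalars gives $\ket{A}=(1+i)^{n+n'-k}(1+i)^{k}\,C\ket{0}^{\otimes(n+n')}=(1+i)^{n+n'}C\ket{0}^{\otimes(n+n')}$, where $C=(\overline{L}\otimes R)\Pi^{-1}(I^{\otimes(n+n'-2k)}\otimes D_k)$ is a composition of Cliffords, hence a Clifford. I expect the only real obstacle to be bookkeeping rather than mathematics: ensuring the powers of $1+i$ sum to precisely $n+n'$ (which is exactly what the scalar conventions in the definitions of the stabilizer operator and the Choi state were designed to guarantee) and tracking which Clifford acts on which register through the ricochet and permutation steps. After the exact rewriting, the minimal-vector claim follows at once from the results already established.
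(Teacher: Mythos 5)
Your proposal is correct and follows essentially the same route as the paper: apply the operator--vector (ricochet) identity to move $L^\dagger$ to the input register as $L^\ast$, reduce the residual vector to $\ket{0}^{\otimes(n-k)}\otimes\ket{0}^{\otimes(n'-k)}\otimes\ket{\Omega_k}$ up to a qubit permutation, and conclude via \cref{lem:minima}. The only (cosmetic) difference is the last step: the paper identifies the residual vector directly as a tensor product of the minimal vectors $(1+i)\ket{0}$ and $(1+i)(\ket{00}+\ket{11})$ of small Barnes--Wall lattices, whereas you absorb the Bell pairs into further Clifford factors to reach the canonical form $(1+i)^{n+n'}C\ket{0}^{\otimes(n+n')}$; both conclusions rest on the same facts already established.
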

\begin{proof}
Using the definition of the stabilizer operator \cref{eq:stab-operator} and equality 
$$
 \sum_{j \in \{0,1\}^n } \ket{j}\otimes L^\dagger \ket{j}  = \sum_{j \in \{0,1\}^n } L^\ast \ket{j}\otimes \ket{j} 
$$
we can rewrite the Choi state as
$$
\ket{A} =  (1+i)^{n+n'-k} (L^\ast \otimes R) \left( \sum_{j \in \{0,1\}^{n} } \ket{j}\otimes  \left(\ket{0}^{\otimes (n'-k)}\otimes I_{2^k}\right) \cdot \left(\bra{0}^{\otimes (n-k)}\otimes I_{2^k} \right) \ket{j} \right).
$$
Using that $\ket{j}\otimes \left(\bra{0}^{\otimes (n-k)}\otimes I_{2^k} \right)\textbf{}\ket{j} = \ket{j}\otimes \left(\otimes I_{2^k} \right)\textbf{}\ket{j}$  is 
non-zero only when the first $n-k$ bits of $j$ are zero, we can further simplify $\ket{A}$ to
$$
(1+i)^{n+n'-k} (L^\ast \otimes R) \left( \ket{0}^{\otimes (n-k)} \otimes \sum_{j' \in \{0,1\}^{k} } \ket{j'}\otimes  \ket{0}^{\otimes (n'-k)} \otimes \ket{j'} \right).
$$
The above vector is a minimal vector of the Barnes-Wall lattice because an element of the $(n+n')$-qubit Clifford group $L^\ast \otimes R$ preserves the lattice
and the vector 
$$
(1+i)^{n+n'-k}\left( \ket{0}^{\otimes (n-k)} \otimes \sum_{j' \in \{0,1\}^{k} } \ket{j'}\otimes  \ket{0}^{\otimes (n'-k)} \otimes \ket{j'} \right).
$$
is a tensor product of $n+n'-2k$ copies of $(1+i)\ket{0}$, and $k'$ copies of $(1+i)(\ket{00} + \ket{11})$.
The vectors $(1+i)\ket{0}$, $(1+i)(\ket{00} + \ket{11})$ are minimal vectors of the one-qubit and two-qubit Barnes-Wall lattices over $\q(i)$.
\end{proof}

\begin{lemma}
\label{lem:basis-transfrom-and-trace}
Given a $2^{n'}\times2^{n}$ matrix $A$ with entries in $E = \qzm$~($m \ge 2$) its Choi state 
$$
\ket{A} = (1+i)^n \sum_{j \in \{0,1\}^n } \ket{j}\otimes A\ket{j} 
$$
is a minimal-vector of a Barnes-Wall lattice over $E$ if and only if the
matrix $((B^{-1})^{\otimes n'}) A (B^{\otimes n})$ has entries in 
$\Oe$ and $\tr(A^\dagger A) = 2^{n'}$.
\end{lemma}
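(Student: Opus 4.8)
The plan is to prove both directions by matching the two defining properties of a minimal vector to the two hypotheses on $A$. Write $N=n+n'$. By \cref{lem:minima} the $N$-qubit Barnes-Wall lattice over $E$ has basis matrix $B^{\otimes N}$ and minimum $\degK\cdot 2^{N}$, so $\ket{A}$ is a minimal vector precisely when (i) $\ket{A}\in B^{\otimes N}\Oe^{2^N}$, i.e. $(B^{-1})^{\otimes N}\ket{A}$ has entries in $\Oe$, and (ii) $\trq\ip{A|A}=\degK\cdot 2^{N}$ (and $\ket{A}\neq 0$). I would show that (i) is equivalent to the basis condition, and that, with the help of \cref{thm:minimal-vectors}, (ii) is equivalent to the trace condition.

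For (i), I would use the operator-vector correspondence to write $\ket{A}=(1+i)^n(I_{2^n}\otimes A)\ket{\Omega}$ with $\ket{\Omega}=\sum_{j\in\{0,1\}^n}\ket{j}\otimes\ket{j}$, together with the identity $(X\otimes Y)\ket{\Omega}=(I_{2^n}\otimes Y X^T)\ket{\Omega}$. Applying $(B^{-1})^{\otimes N}=(B^{-1})^{\otimes n}\otimes(B^{-1})^{\otimes n'}$ and distributing the scalar $(1+i)^n$ over the $n$ transposed tensor factors gives
\[
(B^{-1})^{\otimes N}\ket{A}=\Bigl(I_{2^n}\otimes (B^{-1})^{\otimes n'}A\bigl((1+i)(B^{-1})^T\bigr)^{\otimes n}\Bigr)\ket{\Omega}.
\]
The computational crux is the exact unimodular identity $(1+i)(B^{-1})^T=BW$ with
\[
W=\begin{pmatrix} 1-i & -1\\ -1 & 1+i\end{pmatrix},\qquad \det W=1,\quad W,\ W^{-1}\in\mathrm{Mat}_2(\Oe),
\]
which I would check by direct multiplication. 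This turns the matrix on the right into $(B^{-1})^{\otimes n'}A\,B^{\otimes n}W^{\otimes n}=M\,W^{\otimes n}$, whose coordinates are exactly the entries of $MW^{\otimes n}$. Since $W^{\otimes n}$ is $\Oe$-unimodular, $MW^{\otimes n}$ has entries in $\Oe$ iff $M=(B^{-1})^{\otimes n'}A\,B^{\otimes n}$ does, establishing the equivalence of (i) with the basis condition.

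For (ii), a one-line computation gives $\ip{A|A}=|1+i|^{2n}\tr(A^\dagger A)=2^n\tr(A^\dagger A)$, where $\tr(A^\dagger A)\in K$ is totally non-negative, so $\trq\ip{A|A}=2^n\,\trq\bigl(\tr(A^\dagger A)\bigr)$. If $\tr(A^\dagger A)=2^{n'}\in\q$ then $\trq(2^{n'})=\degK\cdot 2^{n'}$, hence $\trq\ip{A|A}=\degK\cdot 2^{N}$, so combined with (i) the basis and trace conditions make $\ket{A}$ a nonzero lattice vector attaining the minimum, i.e. a minimal vector. For the converse I would invoke \cref{thm:minimal-vectors}: every minimal vector equals $\zm^{k}(1+i)^{N}C\ket{0}^{\otimes N}$ for a Clifford $C$, and therefore has the \emph{rational} norm $\ip{A|A}=2^{N}$; then $2^n\tr(A^\dagger A)=2^{N}$ forces $\tr(A^\dagger A)=2^{n'}$, while minimality gives (i) and hence the basis condition. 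The main obstacle is exactly this last point: condition (ii) only constrains $\trq\bigl(\tr(A^\dagger A)\bigr)$, and pinning down the value of $\tr(A^\dagger A)$ itself — rather than merely its $K/\q$-trace, which a totally non-negative element of $K$ does not determine — genuinely requires the explicit description of minimal vectors in \cref{thm:minimal-vectors}. The secondary technical care is getting the unimodular identity and the $(1+i)^n$ bookkeeping in step (i) exactly right.
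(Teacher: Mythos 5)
Your proof is correct and follows essentially the same route as the paper's: the norm identity $\ip{A|A}=2^{n}\tr(A^\dagger A)$ for the trace condition, and the operator--vector identity plus the unimodular factorization $(1+i)(B^{-1})^T=BW$ (the paper's unnamed $M$, which you compute explicitly) for the basis condition. The one place you go beyond the paper is the ``only if'' direction, where you invoke \cref{thm:minimal-vectors} to pin down $\ip{A|A}=2^{n+n'}$ exactly rather than merely its $K/\q$-trace --- a point the paper's proof leaves implicit.
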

\begin{proof}
The proof proceeds in two steps.
First we show that $\ip{A|A} = 2^{n'+n}$ if and only if $\tr(A^\dagger A) = 2^{n'}$.
Second we show that $\ket{A}$ belongs to the $(n+n')$-qubit Barnes-Wall lattice if and only if the
matrix $((B^{-1})^{\otimes n'}) A (B^{\otimes n})$ has entries in 
$\Oe$. 
When $\ket{A}$ is in the Barnes-Wall lattice, the condition $\ip{A|A} = 2^{n'+n}$ ensures that it is a minimal vector.

For the first step, the relation between the norm squared $\ip{A|A}$ and $\tr(A^\dagger A)$ follows from the following well-known observation:
$$
 \left(\sum_{j' \in \{0,1\}^{\otimes n}} \ket{j'}\otimes A\ket{j'}\right)^\dagger\left(\sum_{j \in \{0,1\}^{\otimes n}} \ket{j}\otimes A\ket{j}\right) = \sum_{j \in \{0,1\}} \ip{j |A^\dagger A | j} = \tr(A^\dagger A).
$$

For the second step, recall that $\ket{A}$ belongs to the Barnes-Wall lattice if and only if $(B^{-1})^{\otimes (n+n')}\ket{A}$ has entries in $\Oe$.
Using
$$
 \sum_{j \in \{0,1\}^n } (B^{-1})^{\otimes (n)}\ket{j}\otimes A\ket{j} =  \sum_{j \in \{0,1\}^n } \ket{j}\otimes A  ((B^{-1})^T)^{\otimes (n)}\ket{j}
$$
we rewrite $(B^{-1})^{\otimes (n+n')}\ket{A}$ as
$$
(1+i)^n \sum_{j \in \{0,1\}^n } \ket{j}\otimes ((B^{-1})^{\otimes n'}) A (((B^{-1})^T)^{\otimes n})\ket{j}.
$$
We have established that $(B^{-1})^{\otimes (n+n')}\ket{A}$ has entries in $\Oe$ 
if and only if $$(1+i)^n ((B^{-1})^{\otimes n'}) A (((B^{-1})^T)^{\otimes n})$$ has entries in $\Oe$.
It remains to observe that $(1+i) (B^{-1})^T = B M$ for some matrix $M$ with entries in $\Oe$ and determinant $1$.
The required result follows from the fact that $ A' = ((B^{-1})^{\otimes n'}) A (B^{\otimes n})$ has entries in 
$\Oe$ if and only if $A' (M^{-1})^{\otimes n}$ has entries in $\Oe$. 
\end{proof}

We have now established all the intermediate result needed to prove the main result of the paper.

\basiscondition
\begin{proof}
According to \cref{lem:basis-transfrom-and-trace}, the Choi state $\ket{A}$ is a minimal vector of the $(n+n')$-qubit Barnes-Wall     
lattice. According to \cref{thm:minimal-vectors} $\ket{A}$ is a stabilizer state.
As discussed in \cref{sec:stabilizer-operators} $\ket{A}$ is proportional to the Choi state of a stabilizer operator 
$$
\ket{A'}  = (1+i)^n \sum_{j \in \{0,1\}^n } \ket{j}\otimes ((B^{-1})^{\otimes n'}) A (((B^{-1})^T)^{\otimes n})\ket{j}.
$$
According to \cref{lem:stab-op-to-miv-vector}, the Choi state $\ket{A'}$
is also a minimal vector of the $(n+n')$-qubit Barnes-Wall lattice.
We have $\ket{A} = \alpha \ket{A'}$ for some $\alpha$ from $E$.
Given the description of minimal vectors in~\cref{thm:minimal-vectors}, the equality  $\ket{A} = \alpha \ket{A'}$ 
is only possible when $\alpha$ is a root of unity, which concludes the proof.
Finally, note that $n=k$ when the kernel of $A$ is trivial.
\end{proof}

\section{Concluding remarks}

Our results and proof techniques can likely be extended to real Clifford groups and real cyclotomic fields.
The analog of Barnes-Wall lattice for qutrits does not have a nice tensor product structure, so generalizing our results to qutrits and beyond will require a different approach.
Our result can likely be extended to more general fields that contain $\q(i)$ as a sub-field.
We provide partial results toward these directions in \cref{app:beyond-clifford}.

Transforming a matrix into Barnes-Wall basis can be viewed as an alternative to channel representation as introduced in~\cite{Gosset2014}.
The denominator exponents of matrices over $\mathbb{Z}[i,1/\sqrt{2}]$ with respect to the Barnes-Wall basis are more closely related to the $T$-count
of a unitary than the denominators exponents in the usual computational basis.
The denominator exponents of quantum states expressed in the Barnes-Wall basis can also be used to define a monotone similar to the dyadic monotone~\cite{Beverland2020} that is more efficient to compute.

We also hope that our results help pave the way for a self-contained and accessible to a broader audience proof of the fact that Clifford groups are maximal finite sub-groups of a unitary group\cite{Nebe2001}.

Approximating a state with a stabilizer state (with distance measured via $l_2$ norm) can be related to the decoding of Barnes-Wall lattices.
The decoding of a Barnes-Wall lattice is a well-studied question outside of the quantum computing community~\cite{Micciancio2008}.
\newpage 

\printbibliography

\newpage

\appendix

\section{Beyond standard Clifford group}
\label{app:beyond-clifford}

Here we describe a heuristic numerical technique that can help find an analog of the matrix $B$ in \cref{thm:basis-condition}
for the case when the matrix $A$ in the theorem is unitary or a state.
Suppose we are given a finite group $G$ of $N$-dimensional unitary matrices with entries in some number field $E$ that is a CM-field.
We can then construct a Hermitian lattice 
\begin{equation}
\label{eq:orbit-lattice}
L =  \sum_{v \in \{g\ket{0}\,:\,g \in G\}} v \Oe 
\end{equation}
The automorphism group of $L$ is the group of unitary matrices with entries in $E$ that preserve $L$.
By construction, $G$ is a sub-group of the automorphism group of $L$.
There are well-known algorithms for computing the automorphism group and minimal vectors of $L$~\cite{Kirschmer2016}.
The implementations of these algorithms are available in a standalone Magma package~\cite{Kirschmer2016code} and in the Hecke~\cite{Nemo.jl-2017} Julia package.
When the automorphism group of the lattice $L$ is the same as $G$ and the lattice $L$ has basis\footnote{When $\Oe$ is not a principal ideal domain, $L$ might not have a basis over $\Oe$} $\tilde B$,
we can describe all elements of $G$ as matrices with entries in $\Oe$ when expressed in the basis $\tilde B$.
We can also compute the minimal vectors and check that they coincide with the group orbit $\{g\ket{0}\,:\,g \in G\}$.
If these two conditions hold, we get an analog to a special case of \cref{thm:basis-condition} when the matrix $A$ is an $N\times N$
unitary or an $N\times 1$ state.
The fact that there are finitely many minimal vectors can be used to check if similar results hold for $N'\times N$ isometries.
Our numerical results are summarised in the following two theorems:

\begin{table}[t]
\centering

\begin{tabular}{|c|c|c|c|c|}
\hline 
\multirow{2}{*}{Group $G$} & Dim. & Basis  & \multirow{2}{*}{Number field $E$} & \multirow{2}{*}{Center $C$}\tabularnewline
 & $N$ & change $\tilde{B}$ &  & \tabularnewline
\hline 
\hline 
Clifford & $2$ & $B_{\mathbb{C},1}$ (\ref{eq:basis-clifford}) & $\mathbb{Q}(\zeta_{4m}),m\in[2,8]$ & $\zeta_{4m}^{j}I,j\in[4m]$\tabularnewline
\cline{2-5} \cline{3-5} \cline{4-5} \cline{5-5} 
group~(\ref{sec:clifford-basis}) & $4$ & $B_{\mathbb{C},2}$ (\ref{eq:basis-clifford}) & $\mathbb{Q}(\zeta_{4m}),m\in[2,8]$ & $\zeta_{4m}^{j}I,j\in[4m]$\tabularnewline
\hline 
\hline 
Real Clifford & $2$ & $B_{\mathbb{R},1}$ (\ref{eq:basis-real-clifford}) & $\mathbb{Q}(\zeta_{8m})\cap\mathbb{R},m\in[1,4]$ & $\pm I$\tabularnewline
\cline{2-5} \cline{3-5} \cline{4-5} \cline{5-5} 
group~(\ref{sec:real-clifford-basis}) & $4$ & $B_{\mathbb{R},2}$ (\ref{eq:basis-real-clifford}) & $\mathbb{Q}(\zeta_{8m})\cap\mathbb{R},m\in[1,4]$ & $\pm I$\tabularnewline
\cline{2-5} \cline{3-5} \cline{4-5} \cline{5-5} 
 & $8$ & $B_{\mathbb{R},3}$ (\ref{eq:basis-real-clifford})& $\mathbb{Q}(\zeta_{8})\cap\mathbb{R}=\mathbb{Q}(\sqrt{2})$ & $\pm I$\tabularnewline
\hline 
\hline 
Rational subgroup & $2$ & $B_{\mathbb{Q},1}$ (\ref{eq:basis-rational-clifford}) & $\mathbb{Q}$ & $\pm I$\tabularnewline
\cline{2-5} \cline{3-5} \cline{4-5} \cline{5-5} 
of Clifford group& $4$ & $B_{\mathbb{Q},2}$ (\ref{eq:basis-rational-clifford}) & $\mathbb{Q}$ & $\pm I$\tabularnewline
\cline{2-5} \cline{3-5} \cline{4-5} \cline{5-5} 
 ~(\ref{sec:rational-clifford-basis})  & $16$ & $B_{\mathbb{Q},4}$ (\ref{eq:basis-rational-clifford-4}) & $\mathbb{Q}$ & $\pm I$\tabularnewline
\hline 
\hline 
Qutrit Clifford & $3$ & $B_{1}^{(3)}$ (\ref{eq:basis-qutrit-clifford-1}) & $\mathbb{Q}(\zeta_{3m}),m\in[1,9]$ & $\pm\zeta_{3m}^{j}I$\tabularnewline
\cline{2-5} \cline{3-5} \cline{4-5} \cline{5-5} 
group~(\ref{sec:qudit-clifford-basis}) & $9$ & $B_{2}^{(3)}$ (\ref{eq:basis-qutrit-clifford-2}) & $\mathbb{Q}(\zeta_{3m}),m\in[1,9]$ & $\pm\zeta_{3m}^{j}I$\tabularnewline
\hline 
\hline 
Qudit Clifford & $5$ & $B_{1}^{(5)}$ (\ref{eq:basis-qupant-clifford})& $\mathbb{Q}(\zeta_{5m}),m\in[1,3]$ & $\pm\zeta_{5m}^{j}I$\tabularnewline
group, $d=5$~(\ref{sec:qudit-clifford-basis}) &  &  &  & \tabularnewline
\hline 
\end{tabular}

    \caption{Basis changes related to various Clifford groups and their subgroups that can be used to identify 
    these group elements and the states that can be prepared by unitaries from these groups. See \cref{thm:unitary-basis-beyond-clifford}, 
    \cref{thm:state-basis-beyond-clifford} for how to use this table.}
    \label{tab:basis-changes}
\end{table}

\begin{theorem}
\label{thm:unitary-basis-beyond-clifford}
Suppose that $U$ is an $N\times N$ unitary matrix with entries in the field $E$ such that $\tilde B^{-1} U B$ has entries in $\Oe$,
where $N,E,\tilde B$ are given in Table~1.
The matrix $U$ is a product of generators of the group $G$ times the scalar matrix from $C$, where $G$, $C$ are given in corresponding rows of \cref{tab:basis-changes}.
That is $U$ is in $G$ up to a global phase.
\end{theorem}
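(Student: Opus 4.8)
The plan is to recast the integrality hypothesis as the statement that $U$ is an automorphism of the orbit lattice $L$ of \eqref{eq:orbit-lattice}, and then to appeal to the determination of $\mathrm{Aut}(L)$ supplied by the standard Hermitian-lattice algorithms.

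First I would use that $\tilde{B}$ is a basis matrix of $L$, so that $L = \tilde{B}\,\Oe^N$. The hypothesis then says precisely that $U$ has entries in $\Oe$ when written in the basis $\tilde{B}$, i.e. $\tilde{B}^{-1} U \tilde{B} \in M_N(\Oe)$, and this gives $U L = \tilde{B}(\tilde{B}^{-1} U \tilde{B})\,\Oe^N \subseteq \tilde{B}\,\Oe^N = L$. To upgrade $U L \subseteq L$ to $U L = L$ I would note that $U$, being unitary, preserves the Hermitian form, so $(U\tilde{B})^\dagger(U\tilde{B}) = \tilde{B}^\dagger\tilde{B}$ and hence $\det U = \det(\tilde{B}^{-1} U \tilde{B})$ has absolute value $1$. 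Here the CM hypothesis on $E$ is essential: complex conjugation is central in $\mathrm{Gal}(E/\q)$, so every Galois conjugate $\sigma(U)$ is again unitary and $|\sigma(\det U)| = 1$ for all embeddings $\sigma$. Since $\det U \in \Oe$ is an algebraic integer all of whose conjugates have absolute value $1$, Kronecker's theorem makes it a root of unity, hence a unit, so $\tilde{B}^{-1} U \tilde{B} \in \mathrm{GL}_N(\Oe)$ and $U L = L$. Thus the hypothesis is equivalent to $U \in \mathrm{Aut}(L)$.

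Next I would identify $\mathrm{Aut}(L)$. One inclusion is automatic from the construction \eqref{eq:orbit-lattice}: every $g \in G$ permutes the generating orbit $\{g\ket{0} : g \in G\}$ and every central scalar in $C$ rescales $L$ by a unit, so $\langle G, C\rangle \subseteq \mathrm{Aut}(L)$. The content of the theorem is the reverse inclusion, and this is exactly what the automorphism-group algorithms for Hermitian lattices \cite{Kirschmer2016}, as implemented in \cite{Kirschmer2016code, Nemo.jl-2017}, are run to certify: for each row of \cref{tab:basis-changes} the computed group $\mathrm{Aut}(L)$ has the order of $\langle G, C\rangle$ and is generated by the listed generators of $G$ together with the scalars $C$. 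Granting $\mathrm{Aut}(L) = \langle G, C\rangle$, any $U$ satisfying the hypothesis factors as $U = \zeta\,g$ with $g \in G$ and $\zeta I \in C$; writing $g$ as a word in the generators of $G$ then yields the asserted form, with $\zeta I$ a global phase.

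The main obstacle is precisely this reverse inclusion $\mathrm{Aut}(L) \subseteq \langle G, C\rangle$: a priori the orbit lattice may carry extra symmetries, and for other groups the heuristic does fail. The substantive step is therefore the certified computation that the automorphism group is no larger than $\langle G, C\rangle$ — equivalently that its order equals $|G|\cdot|C|/|G\cap C|$ — together with the check that $L$ actually admits the $\Oe$-basis $\tilde{B}$, which can break down when $\Oe$ is not a principal ideal domain. The remaining content is the routine dictionary between integrality in a fixed lattice basis and membership in the lattice automorphism group.
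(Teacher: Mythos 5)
Your proposal matches the paper's approach: the paper also reduces the integrality hypothesis to membership in the automorphism group of the orbit lattice $L = \sum_{v \in \{g\ket{0}\}} v\,\Oe$ with basis $\tilde B$, and rests the theorem on the certified computation (via the Hermitian-lattice algorithms of \cite{Kirschmer2016} as implemented in \cite{Nemo.jl-2017}) that $\mathrm{Aut}(L)$ is no larger than $G$ together with its center. Your added justification that $UL \subseteq L$ upgrades to $UL = L$ via Kronecker's theorem is a detail the paper leaves implicit, but it does not change the route.
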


\begin{theorem}
\label{thm:state-basis-beyond-clifford}
Suppose that $\ket{\psi}$ is a state with entries in the field $E$ such that $\tilde B^{-1} \ket{\psi}$ has entries in $\Oe$,
where $N,E,\tilde B$ are given in Table~1.
The matrix $\ket{\psi}$ is a product of generators of group $G$ times the scalar matrix from $C$ times $\ket{0}$, where $G$, $C$ are given in corresponding rows of \cref{tab:basis-changes}.
That is $\ket{\psi}$ can be prepared by elements of $G$ from $\ket{0}$ up to a global phase.
\end{theorem}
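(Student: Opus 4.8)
The plan is to mirror the proof of \cref{thm:basis-condition}: reinterpret the divisibility hypothesis as membership in the orbit lattice, use the state normalization to force $\ket{\psi}$ to be a minimal vector, and then appeal to a finite computation that identifies the minimal vectors with the group orbit. The key conceptual point is that, for each row of \cref{tab:basis-changes}, the matrix $\tilde B$ is a basis matrix of the Hermitian lattice $L$ from \cref{eq:orbit-lattice} spanned by the orbit $\{g\ket{0} : g \in G\}$, so the condition ``$\tilde B^{-1}\ket{\psi}$ has entries in $\Oe$'' is exactly the statement $\ket{\psi} \in L$.

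First I would record the norm bookkeeping. Every orbit vector $g\ket{0}$ is unitarily related to $\ket{0}$, so all orbit vectors share the norm of $\ket{0}$; reading ``$\ket{\psi}$ is a state'' as the same normalization fixes $\trq\langle\psi|\psi\rangle$ to this common value. Next I would argue that this common value is precisely the minimum of $L$. The upper bound is immediate, since the orbit vectors lie in $L$ and realize this norm. The matching lower bound, together with the exact determination of the set of minimal vectors, is the content of a finite computation: using the standard algorithms for minimal vectors of Hermitian lattices \cite{Kirschmer2016}, one enumerates the minimal vectors of $L$ and checks that they coincide with $\{c\, g\ket{0} : g \in G,\ c \in C\}$, that is, the orbit up to the scalar matrices in the center $C$. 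This is the step verified numerically for each entry of \cref{tab:basis-changes}, and it is the analogue in this setting of \cref{thm:minimal-vectors}.

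Granting this, $\ket{\psi}$ is a nonzero vector of $L$ whose norm equals the minimum, hence a minimal vector, hence of the form $c\, g\ket{0}$ for some $g \in G$ and $c \in C$. Expressing $g$ as a product of the generators of $G$ then gives the asserted normal form and shows that $\ket{\psi}$ is prepared from $\ket{0}$ by an element of $G$ up to a global phase. (The companion statement \cref{thm:unitary-basis-beyond-clifford} additionally requires certifying that the full automorphism group of $L$ equals $G$ up to the center, which is not needed here.)

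The main obstacle is the numerical certification itself. One must check that $L$ genuinely admits a basis over $\Oe$, which can fail when $\Oe$ is not a principal ideal domain (cf.\ the footnote to \cref{eq:orbit-lattice}), and one must check that the enumerated minimal vectors contain no vectors outside the orbit. The latter is exactly where the heuristic can break down, so the theorem should be read as the assertion that these two checks succeed for the specific rows listed in \cref{tab:basis-changes}; the reduction above is rigorous, but its hypothesis (minimal vectors $=$ orbit up to center) is established computationally rather than in closed form.
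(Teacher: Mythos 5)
Your proposal matches the paper's approach: the paper gives no formal proof of this theorem, but the appendix preamble describes exactly the reduction you outline — identify the hypothesis with membership in the orbit lattice of \cref{eq:orbit-lattice} with basis $\tilde B$, use the normalization to conclude $\ket{\psi}$ is a minimal vector, and rely on a numerically certified check that the minimal vectors coincide with the orbit $\{g\ket{0} : g\in G\}$ up to the center. Your observation that only the minimal-vector check (and not the full automorphism-group computation) is needed for the state version is a correct and slightly sharper reading of the same argument.
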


At first glance it might be surprising that changing the base ring of lattices $\Oe$ does not change the minimal vectors that much.
However, changing the base ring can be viewed as a tensor product operation of two lattices. 
We refer the interested reader to~\cite{coulangeon2012unreasonable} for an investigation of the minimal vectors of tensor products of lattices.

The rational sub-group of the Clifford group on three qubits is an example where our theorem does not hold.
The automorphism group of the lattice derived from the rational subgroup has size $348\,364\,800$ which is larger than the size of three-qubit Clifford group $92\,897\,280$~(\cref{sec:clifford-group-sizes}).
In \cref{sec:rational-clifford-basis}, we show that the rational sub-group of the Clifford group on three qubits is the intersections of automorphism groups of two lattices.

\subsection{Common groups, their generators and basis change matrices}
\subsubsection{Clifford group}
\label{sec:clifford-basis}
Clifford group on $n$ qubits is the group generated by two-qubit gates $CX,CZ$ 
and one-qubit gates $X,Z,\tilde H,S$ and has center $i^j I$.
The corresponding basis change matrices are:
\begin{equation}
\label{eq:basis-clifford}
B_{\mathbb{C},1} = 
\left[
\begin{array}{cc}
\frac{1}{2} \cdot i + \frac{1}{2} & 0 \\
\frac{1}{2} \cdot i + \frac{1}{2} & 1 \\
\end{array}
\right]
\quad 
B_{\mathbb{C},n} = B_{\mathbb{C},1}^{\otimes n}
\end{equation}
Note that we have re-scaled the basis change for Clifford group in the appendix to make the theorem statements in the appendix more uniform.
\subsubsection{Real Clifford group}
\label{sec:real-clifford-basis}
The real Clifford group on $n$ qubits is the group generated by the two-qubit gates $CX,CZ$ 
and the one-qubit gates $X,Z,H$ and has center $\pm I$.
The corresponding basis change matrices are:
\begin{equation}
\label{eq:basis-real-clifford}
B_{\mathbb{R},1} = \left[
\begin{array}{cc}
\frac{1}{2} \cdot \sqrt{2} & 0 \\
\frac{1}{2} \cdot \sqrt{2} & 1 \\
\end{array}
\right]
\quad 
B_{\mathbb{R},n} = B_{\mathbb{R},1}^{\otimes n}
\end{equation}
\subsubsection{Rational subgroup of the Clifford group}
\label{sec:rational-clifford-basis}
The rational subgroup of the Clifford group on $n$ qubits is the group generated by the two-qubit gates $CX,CZ, H \otimes H$ 
and the one-qubit gates $X,Z$ and has center $\pm I$.

The basis changes for one-, two- and three-qubit rational subgroups of the Clifford group are 

\begin{equation}
\label{eq:basis-rational-clifford}
B_{\q,1}=
\left[
\begin{array}{cc}
1 & 0 \\
0 & 1 \\
\end{array}
\right]
\quad 
B_{\q,2}= 
\left[
\begin{array}{cccc}
\frac{1}{2} & 0 & 0 & 0 \\
\frac{1}{2} & 1 & 0 & 0 \\
\frac{1}{2} & 0 & 1 & 0 \\
\frac{1}{2} & 0 & 0 & 1 \\
\end{array}
\right]
\quad
B_{\q,3}=
\left[
\begin{array}{cccccccc}
\frac{1}{2} & 0 & 0 & 0 & 0 & 0 & 0 & 0 \\
0 & \frac{1}{2} & 0 & 0 & 0 & 0 & 0 & 0 \\
0 & 0 & \frac{1}{2} & 0 & 0 & 0 & 0 & 0 \\
\frac{1}{2} & \frac{1}{2} & \frac{1}{2} & 1 & 0 & 0 & 0 & 0 \\
0 & 0 & 0 & 0 & \frac{1}{2} & 0 & 0 & 0 \\
\frac{1}{2} & \frac{1}{2} & 0 & 0 & \frac{1}{2} & 1 & 0 & 0 \\
\frac{1}{2} & 0 & \frac{1}{2} & 0 & \frac{1}{2} & 0 & 1 & 0 \\
0 & \frac{1}{2} & \frac{1}{2} & 0 & \frac{1}{2} & 0 & 0 & 1 \\
\end{array}
\right]
\end{equation}

\begin{equation}
\label{eq:basis-rational-clifford-4}
B_{\q,4}= 
\left[
\begin{array}{cccccccccccccccc}
\frac{1}{4} & 0 & 0 & 0 & 0 & 0 & 0 & 0 & 0 & 0 & 0 & 0 & 0 & 0 & 0 & 0 \\
\frac{1}{4} & \frac{1}{2} & 0 & 0 & 0 & 0 & 0 & 0 & 0 & 0 & 0 & 0 & 0 & 0 & 0 & 0 \\
\frac{1}{4} & 0 & \frac{1}{2} & 0 & 0 & 0 & 0 & 0 & 0 & 0 & 0 & 0 & 0 & 0 & 0 & 0 \\
\frac{1}{4} & 0 & 0 & \frac{1}{2} & 0 & 0 & 0 & 0 & 0 & 0 & 0 & 0 & 0 & 0 & 0 & 0 \\
\frac{1}{4} & 0 & 0 & 0 & \frac{1}{2} & 0 & 0 & 0 & 0 & 0 & 0 & 0 & 0 & 0 & 0 & 0 \\
\frac{1}{4} & 0 & 0 & 0 & 0 & \frac{1}{2} & 0 & 0 & 0 & 0 & 0 & 0 & 0 & 0 & 0 & 0 \\
\frac{1}{4} & 0 & 0 & 0 & 0 & 0 & \frac{1}{2} & 0 & 0 & 0 & 0 & 0 & 0 & 0 & 0 & 0 \\
\frac{1}{4} & \frac{1}{2} & \frac{1}{2} & \frac{1}{2} & \frac{1}{2} & \frac{1}{2} & \frac{1}{2} & 1 & 0 & 0 & 0 & 0 & 0 & 0 & 0 & 0 \\
\frac{1}{4} & 0 & 0 & 0 & 0 & 0 & 0 & 0 & \frac{1}{2} & 0 & 0 & 0 & 0 & 0 & 0 & 0 \\
\frac{1}{4} & 0 & 0 & 0 & 0 & 0 & 0 & 0 & 0 & \frac{1}{2} & 0 & 0 & 0 & 0 & 0 & 0 \\
\frac{1}{4} & 0 & 0 & 0 & 0 & 0 & 0 & 0 & 0 & 0 & \frac{1}{2} & 0 & 0 & 0 & 0 & 0 \\
\frac{1}{4} & \frac{1}{2} & \frac{1}{2} & \frac{1}{2} & 0 & 0 & 0 & 0 & \frac{1}{2} & \frac{1}{2} & \frac{1}{2} & 1 & 0 & 0 & 0 & 0 \\
\frac{1}{4} & 0 & 0 & 0 & 0 & 0 & 0 & 0 & 0 & 0 & 0 & 0 & \frac{1}{2} & 0 & 0 & 0 \\
\frac{1}{4} & \frac{1}{2} & 0 & 0 & \frac{1}{2} & \frac{1}{2} & 0 & 0 & \frac{1}{2} & \frac{1}{2} & 0 & 0 & \frac{1}{2} & 1 & 0 & 0 \\
\frac{1}{4} & 0 & \frac{1}{2} & 0 & \frac{1}{2} & 0 & \frac{1}{2} & 0 & \frac{1}{2} & 0 & \frac{1}{2} & 0 & \frac{1}{2} & 0 & 1 & 0 \\
\frac{1}{4} & 0 & 0 & \frac{1}{2} & 0 & \frac{1}{2} & \frac{1}{2} & 0 & 0 & \frac{1}{2} & \frac{1}{2} & 0 & \frac{1}{2} & 0 & 0 & 1 \\
\end{array}
\right]
\end{equation}

The lattices with basis $B_{\q,n}$ obtained from the orbit of the rational sub-group of the Clifford group via \cref{eq:orbit-lattice} are isometric to 
\begin{equation}
\label{eq:rational-basis}
\mathrm{BW}_{\q,n} = \frac{1}{2}
\left(
\begin{array}{c|c}
     \mathrm{Re}(B_{\mathbb{C}}^{\otimes (n-1)}) & -\mathrm{Im}(B_{\mathbb{C}}^{\otimes (n-1)}) \\
     \hline 
     \mathrm{Im}(B_{\mathbb{C}}^{\otimes (n-1)}) &  \mathrm{Re}(B_{\mathbb{C}}^{\otimes (n-1)}).
\end{array}
\right),
\end{equation}
for $n=2,3,4,5$ with appropriate rescaling of the inner product.
We checked this numerically using \cite{Nemo.jl-2017}.
We provide a Jupyter notebook to reproduce this calculation and to provide additional details on the inner product rescaling.
The lattices with bases $\mathrm{BW}_{\q,n}$ are known as the rational Barnes-Wall lattices.

The lattice in \cref{eq:orbit-lattice} can be constructed as an integer span of the orbit of an arbitrary vector, not just $|0\rangle$. For example, by considering the integer span of the orbit of $|000\rangle + |100\rangle$ under the rational sub-group of the Clifford group we get the lattice with basis 
\begin{equation}
\label{eq:basis-rational-clifford-3}
\tilde B_{\q,3}= 
\left[
\begin{array}{cccccccc}
\frac{1}{2} & 0 & 0 & 0 & 0 & 0 & 0 & 0 \\
\frac{1}{2} & 1 & 0 & 0 & 0 & 0 & 0 & 0 \\
\frac{1}{2} & 0 & 1 & 0 & 0 & 0 & 0 & 0 \\
\frac{1}{2} & 0 & 0 & 1 & 0 & 0 & 0 & 0 \\
\frac{1}{2} & 0 & 0 & 0 & 1 & 0 & 0 & 0 \\
\frac{1}{2} & 0 & 0 & 0 & 0 & 1 & 0 & 0 \\
\frac{1}{2} & 0 & 0 & 0 & 0 & 0 & 1 & 0 \\
\frac{1}{2} & 1 & 1 & 1 & 1 & 1 & 1 & 2 \\
\end{array}
\right].
\end{equation}
The rational subgroup of the Clifford group is a subgroup of the automorphism of the lattice with basis $\tilde B_{\q,3}$ by construction.

By direct calculation, we checked that the rational subgroup of the Clifford group is equal to the intersection of the automorphism groups of the lattices with bases $\tilde{B}_{\mathbb{Q},3}$ and $B_{\mathbb{Q},3}$. We computed the size of the rational subgroup of the Clifford group on three qubits to be $2,580,480$. We then found that the size of the intersection of the isomorphism groups of lattices with bases $\tilde{B}_{\mathbb{Q},3}$ and $B_{\mathbb{Q},3}$ is also $2,580,480$. Because both automorphism groups include the rational subgroup of the Clifford group, their intersection also includes the rational subgroup of the Clifford group as a subgroup. However, this subgroup has the same size as the entire group, which confirms the required result. We provide a Jupyter notebook to reproduce the calculation in Julia using~\cite{Nemo.jl-2017}.
 
\subsubsection{Qudit Clifford groups}
\label{sec:qudit-clifford-basis}
The qudit Clifford group on a $d$-dimensional qudit is generated by generalized $X,Z,H$ and $S$ gates \cite{Hostens2005}.
The generalized $H$ gate is also known as the $d$-dimensional discrete Fourier transform.
Multi-qudit Clifford groups are generated by the one-qudit generalized $X,Z,H,S$ and a sum gate $\ket{x}\ket{y} \mapsto \ket{x}\ket{(x+y) \text{ mod } d }$ for $x,y \in [d]$
which generalizes the $CX$ gate.
For our results to hold, we adjust the global phase of the generalized $H$ gate, so its entries are in $\q(d)$, similar to how we introduced $\tilde H$ so its entries were in $\q(i)$.
The center of the qudit Clifford groups for odd $d$ is $\{ \pm \zeta^j_d I : j \in [d] \}$.

The basis changes for one and two qutrits are:
\begin{equation}
\label{eq:basis-qutrit-clifford-1}
B^{(3)}_1 = 
\left[
\begin{array}{ccc}
\frac{-1}{3} \cdot \zeta_{3} + \frac{1}{3} & 0 & 0 \\
\frac{-1}{3} \cdot \zeta_{3} + \frac{1}{3} & 1 & 0 \\
\frac{-1}{3} \cdot \zeta_{3} + \frac{1}{3} & 0 & 1 \\
\end{array}
\right]
\end{equation}
 
\begin{equation}
\label{eq:basis-qutrit-clifford-2}
B^{(3)}_2 = 
\left[
\begin{array}{c|c|c|c|c|c|c|c|c}
\frac{1}{3} & 0 & 0 & 0 & 0 & 0 & 0 & 0 & 0 \\
\frac{1}{3} & \frac{2}{3} \cdot \zeta_{3} + \frac{1}{3} & 0 & 0 & 0 & 0 & 0 & 0 & 0 \\
\frac{1}{3} & 0 & \frac{1}{3} \cdot \zeta_{3} + \frac{2}{3} & 0 & 0 & 0 & 0 & 0 & 0 \\
\frac{1}{3} & 0 & 0 & \frac{1}{3} \cdot \zeta_{3} + \frac{2}{3} & 0 & 0 & 0 & 0 & 0 \\
\frac{1}{3} & 0 & 0 & 0 & \frac{1}{3} \cdot \zeta_{3} + \frac{2}{3} & 0 & 0 & 0 & 0 \\
\frac{1}{3} & \frac{2}{3} \cdot \zeta_{3} + \frac{1}{3} & \frac{1}{3} \cdot \zeta_{3} + \frac{2}{3} & \frac{2}{3} \cdot \zeta_{3} + \frac{4}{3} & \frac{2}{3} \cdot \zeta_{3} + \frac{4}{3} & 1 & 0 & 0 & 0 \\
\frac{1}{3} & 0 & 0 & 0 & 0 & 0 & \frac{1}{3} \cdot \zeta_{3} + \frac{2}{3} & 0 & 0 \\
\frac{1}{3} & \frac{4}{3} \cdot \zeta_{3} + \frac{2}{3} & 0 & \frac{1}{3} \cdot \zeta_{3} + \frac{2}{3} & \frac{2}{3} \cdot \zeta_{3} + \frac{4}{3} & 0 & \frac{1}{3} \cdot \zeta_{3} + \frac{2}{3} & 1 & 0 \\
\frac{1}{3} & \frac{4}{3} \cdot \zeta_{3} + \frac{2}{3} & \frac{1}{3} \cdot \zeta_{3} + \frac{2}{3} & \frac{2}{3} \cdot \zeta_{3} + \frac{4}{3} & \frac{1}{3} \cdot \zeta_{3} + \frac{2}{3} & 0 & \frac{1}{3} \cdot \zeta_{3} + \frac{2}{3} & 0 & 1 \\
\end{array}
\right]
\end{equation}

Note that $(B^{(3)}_1)^{\otimes 2}$ and $(B^{(3)}_2)$ generate different lattices, so qutrit case does not have the same nice tensor-product structure of basis matrices.

The basis change for one qudit with $d=5$ is 
\begin{equation}
\label{eq:basis-qupant-clifford}
B^{(5)}_1 = 
\left[
\begin{array}{c|c|c|c|c}
\frac{-2}{5} \cdot \zeta_{5}^{3} - \frac{2}{5} \cdot \zeta_{5}^{2} - \frac{1}{5} & 0 & 0 & 0 & 0 \\
\frac{-2}{5} \cdot \zeta_{5}^{3} - \frac{2}{5} \cdot \zeta_{5}^{2} - \frac{1}{5} & \frac{-3}{5} \cdot \zeta_{5}^{3} - \frac{1}{5} \cdot \zeta_{5}^{2} + \frac{1}{5} \cdot \zeta_{5} - \frac{2}{5} & 0 & 0 & 0 \\
\frac{-2}{5} \cdot \zeta_{5}^{3} - \frac{2}{5} \cdot \zeta_{5}^{2} - \frac{1}{5} & 0 & \frac{-3}{5} \cdot \zeta_{5}^{3} - \frac{1}{5} \cdot \zeta_{5}^{2} + \frac{1}{5} \cdot \zeta_{5} - \frac{2}{5} & 0 & 0 \\
\frac{-2}{5} \cdot \zeta_{5}^{3} - \frac{2}{5} \cdot \zeta_{5}^{2} - \frac{1}{5} & \frac{-6}{5} \cdot \zeta_{5}^{3} - \frac{2}{5} \cdot \zeta_{5}^{2} + \frac{2}{5} \cdot \zeta_{5} - \frac{4}{5} & \frac{-9}{5} \cdot \zeta_{5}^{3} - \frac{3}{5} \cdot \zeta_{5}^{2} + \frac{3}{5} \cdot \zeta_{5} - \frac{6}{5} & 1 & 0 \\
\frac{-2}{5} \cdot \zeta_{5}^{3} - \frac{2}{5} \cdot \zeta_{5}^{2} - \frac{1}{5} & \frac{-6}{5} \cdot \zeta_{5}^{3} - \frac{2}{5} \cdot \zeta_{5}^{2} + \frac{2}{5} \cdot \zeta_{5} - \frac{4}{5} & \frac{-3}{5} \cdot \zeta_{5}^{3} - \frac{1}{5} \cdot \zeta_{5}^{2} + \frac{1}{5} \cdot \zeta_{5} - \frac{2}{5} & 0 & 1 \\
\end{array}
\right]
\end{equation}

\section{Sizes of common Clifford groups}
\label{sec:clifford-group-sizes}
Using a well-known formula for the size of symplectic groups over finite field~\cite{o1978symplectic}, we get the following size of qubit and qudit Clifford groups:

\def\arraystretch{1.0}
\begin{equation}
\label{eq:clifford-sizes}
\begin{array}{ccr}
\text{Qudit dimension} & \text{Number of qudits} & \text{Clifford group size (up to phases)} \\
\hline 
\hline
2 & 1 & 24 \\
2 & 2 & 11\,520 \\
2 & 3 & 92\,897\,280 \\
\hline
3 & 1 & 216 \\
3 & 2 & 4\,199\,040 \\
\hline
5 & 1 & 3\,000 \\
\end{array}
\end{equation}

\end{document}